\theoremstyle{plain}
\newtheorem{theorem}{Theorem}[section]
\newtheorem{proposition}[theorem]{Proposition}
\newtheorem{lemma}[theorem]{Lemma}
\theoremstyle{definition}
\newtheorem{assumption}[theorem]{Assumption}
\newtheorem{remark}[theorem]{Remark}
\numberwithin{equation}{section}
\newcommand{\abs}[1]{\lvert{#1}\rvert}
\newcommand{\norm}[1]{\lVert{#1}\rVert}
\newcommand{\ket}[1]{\lvert{#1}\rangle}
\newcommand{\bra}[1]{\langle{#1}\rvert} 
\newcommand{\ip}[2]{\langle{#1}|{#2}\rangle}
\DeclareMathOperator{\im}{Im}
\DeclareMathOperator{\re}{Re}
\begin{document}
\title{On the adiabatic theorem when eigenvalues dive into the continuum.}
\author{
H. D. Cornean\footnote{Department of Mathematical Sciences, Aalborg University, Fredrik Bajers Vej 7G, 9220 Aalborg \O, Denmark. E-mail: cornean@math.aau.dk, matarne@math.aau.dk, hanskk@math.aau.dk }, \;
A. Jensen\footnotemark[1], \;
H. K. Kn\"orr\footnotemark[1], \;
and G. Nenciu\footnote{Institute of Mathematics of the Romanian Academy, Research Unit 1, P.O. Box 1-764, 014700 Bucharest, Romania. E-mail: gheorghe.nenciu@imar.ro} 
}
\date{}

\maketitle

\begin{abstract}
We consider a reduced two-channel model of an atom consisting of a quantum dot coupled to an open scattering channel described by a three-dimensional Laplacian. We are interested in the survival probability of a bound state when the dot energy varies smoothly and adiabatically in time. The initial state corresponds to a discrete eigenvalue which dives into the continuous spectrum and re-emerges from it as the dot energy is varied in time and finally returns to its initial value. Our main result is that for a large class of couplings, the survival probability of this bound state vanishes in the adiabatic limit. At the end of the paper we present a short outlook on how our method may be extended to cover other classes of Hamiltonians; details will be given elsewhere.
\end{abstract}


\section{Introduction}

We consider a reduced two-channel model of an atom consisting of a zero-dimensional quantum dot coupled to an open scattering channel described by a three-dimensional Laplacian. The energy level $E$ of the quantum dot varies smoothly and adiabatically in time. Initially the atom is in a bound state. If the bound state does not dive into the continuous spectrum during the variation of the dot energy $E$, the standard adiabatic theorem yields that the survival probability of the bound state (at the end of the variation of $E$) is one in the adiabatic limit.

The main problem we address here is what happens with the survival probability of the bound  state when the bound state dives into the continuous spectrum for a while during the adiabatic tuning of $E$. 
On the one hand the survival probability still remains one in the adiabatic limit by the gapless adiabatic theorem \cite{AE,T1,T2} if the bound state persists in the continuum and the corresponding spectral projection is twice differentiable.

On the other hand if the bound state turns into a resonance (in the sense that the analytically continued resolvent has a pole) due to the coupling between the atom and the open channel, the general belief is that the survival probability goes to zero in the adiabatic limit. The heuristics behind this is that the particle makes transitions to continuum states and then scatters towards infinity. In particular, this is the heuristic argument for the existence of the so-called spontaneous (adiabatic) pair creation in linear quantum electrodynamics \cite{N,PD,Th}, as well as for memory effects in quantum mesoscopic transport \cite{BP,CJN,HC,LC}.

Turning this heuristics into a mathematical statement proved to be a hard problem and boiled down to a proof of various aspects of the adiabatic theorem in the case where the eigenvalue hits the threshold of the continuous spectrum. Accordingly, the existence results are very limited and the proofs are rather technical and often need further assumptions \cite{CJN,N,PD}.

Our main result states that the survival probability vanishes in the adiabatic limit, i.e. the adiabatic theorem breaks down, for a large class of couplings between the quantum dot and the open channel, when the bound state dives into the continuous spectrum during the adiabatic tuning of $E$. In addition, a detailed spectral analysis of the model is given and a `threshold adiabatic theorem' is proved.

Our model is considerably simpler than the one in \cite{PD} and allows for rather straightforward dispersive estimates, while the threshold analysis is self-contained. In spite of the relative simplicity of the model, our method is quite robust and may be generalized to cover a larger class of operators. 
In Section~\ref{sec: Dirac} we present a short outlook on the Dirac and $N$-body Schr\"odinger case with $N\geq 2$. Details will be given elsewhere. 

Below we present the setting, formulate the results, and comment on them. Sections~\ref{spectral-analysis}--\ref{subsec: oc-stable} contain the proofs of the statements in the main theorem.

\subsubsection*{Setting and results}\label{sec: model}
The Hilbert space of the model is $\mathcal{H} = L^2(\mathbb{R}^3) \oplus \mathbb{C}$. 
Without a coupling the electron of the atom can either move freely in the open channel $\mathbb{R}^3$ or be localized in the quantum dot with dot energy $E\in \mathbb{R}$. 
In matrix notation the coupled Hamiltonian is given by
\begin{align}\label{eq: Hamiltonian}
H_{\tau}(E) = \begin{bmatrix} -\Delta & 0\\ 0 & E \end{bmatrix}
+ \tau\begin{bmatrix} 0 & \ket{\varphi}\\ \bra{\varphi} & 0 \end{bmatrix}
\end{align}
where $\tau$ is a real coupling parameter and $\varphi \in L^2(\mathbb{R}^3)$ is the coupling function. 
A normalized basis vector of the quantum dot is denoted by $\varsigma$. An alternative representation of the Hamiltonian is
\begin{align*}
H_{\tau}(E) = -\Delta + E \ket{\varsigma}\bra{\varsigma}+\tau \bigl( \ket{\varphi}\bra{\varsigma} + \ket{\varsigma}\bra{\varphi} \bigr).
\end{align*}
Here and in the sequel we use the Dirac notation with the usual ambiguities this implies. For example $\ket{\varsigma}$ denotes both a basis of the subspace $\{0\}\oplus\mathbb{C}$ of $\mathcal{H}$ and a map from $\mathbb{C}$ to $\mathcal{H}$.\\

We recall the definition of the usual weighted $L^2$-spaces
\begin{equation*}
L^{2,w} ( \mathbb{R}^3 ) = \{ f \in L^2(\mathbb{R}^3) \,|\, \int_{\mathbb{R}^3} ( 1 + x^2 )^{w} \abs{f(x)}^2 dx < \infty\},
\quad w\in\mathbb{R}.
\end{equation*}
We write $\mathcal{F}(f)=\widehat{f}$ for the Fourier transform of $f \in L^2(\mathbb{R}^3)$ and $\mathcal{F}^{-1}(f)=\check{f}$ for its inverse. The coupling function $\varphi$ has to fulfill the following conditions: 
\begin{assumption}\label{cond: coupling 1}
Let $\varphi\in L^{2,w}(\mathbb{R}^3)$ for all $w>0$ with $\norm{\varphi}_{L^2(\mathbb{R}^3)} = 1$.
Assume that there exists some integer $\nu\geq1$ such that $\abs{k}^{-\nu}\widehat{\varphi}(k)$ is continuous at $k=0$.
\end{assumption}

Note that there is no loss of generality in assuming that the coupling function is normalized. The first condition on $\varphi$ implies that $(1+x^2)^{n}\varphi \in L^1(\mathbb{R}^3)$ for any $n \in \mathbb{N}$.
Hence $\widehat{\varphi}$ is $C^\infty$ and uniformly bounded.
The continuity condition at zero implies that all derivatives of $\widehat{\varphi}$ vanish at $k=0$ for $\abs{\alpha} < \nu$ and only those derivatives of degree $\abs{\alpha} \geq \nu$ may be non-zero there.
Thus all terms of degree less than $\nu$ vanish in the Taylor expansion of $\widehat{\varphi}$ around $k=0$.
In particular, we always have $\widehat{\varphi}(0)=0$.

We study the Heisenberg evolution of the bound state when the dot energy $E$ changes adiabatically with time.
To this end we let $E$ be time-dependent, $t \mapsto E(\eta t)$, with a parameter $\eta > 0$. In this context the adiabatic limit means $\eta \downarrow 0$.

The time-dependence of $E$ should satisfy the following conditions:

\begin{assumption}\label{cond: switching 1}
The function $E \colon [-1,0] \to \mathbb{R}$ is $C^2( [ -1,0 ] )$. There exists $s_m \in ( -1,0 )$ such that $E(\cdot)$ is strictly increasing on $[-1,s_m ]$ and strictly decreasing on $[ s_m, 0 ]$. Its maximal value $E_m=E(s_m)$ is positive while $E(-1)=E(0)<0$. 
\end{assumption}

Then given any intermediate value $E \in (E(-1),E_m)$ there exist exactly two  points $s<s_m<s'$ such that $E(s)=E(s')=E$.
A sketch of a function $E(\cdot)$ satisfying the above conditions is given in Figure~\ref{fig:1}.

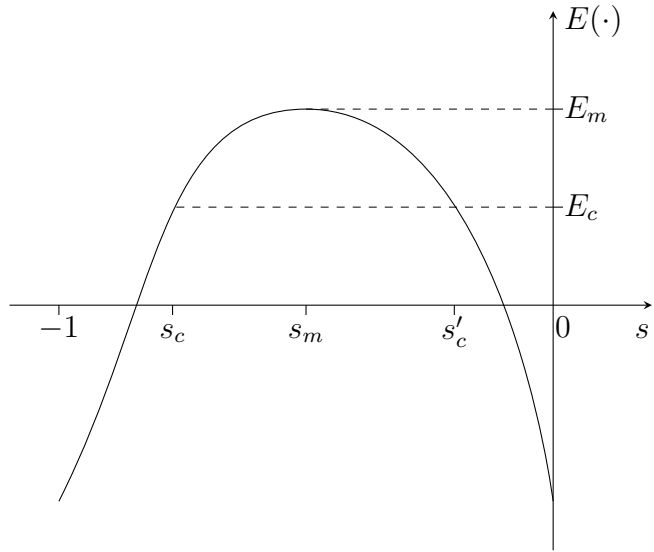
\begin{figure}[ht]
\centering
\begin{tikzpicture}[>=stealth]
\begin{scope}[yscale=1.3,xscale=1.3]
  \draw[->] (-5.5,0) -- (1,0);
  \draw (0.9,-0.08) node[below] {$s$};
  \draw[->] (0,-2.5) -- (0,3.0);
  \draw (0,2.9) node[anchor=west] {$E(\cdot)$};
  \draw (-5,0) -- (-5,-0.1);
  \draw (-5,-0.0) node[below] {$-1$};
  \draw (0.1,0.0) node[below] {$0$};
  \draw (-5,-2) .. controls (-4,0) and (-4,2) .. (-2.5,2)
  .. controls (-0.5,2) and (0,-2) .. (0,-2);
  \draw (0,1) -- (0.1,1);
  \draw (0,1) node[anchor=west] {$E_c$};
  \draw (-1,0) -- (-1,-0.1);
  \draw (-1,0) node[below] {$s_c'$};
  \draw (-3.85,0) -- (-3.85,-0.1);
  \draw (-3.85,-0.08) node[below] {$s_c$};
  \draw (-2.5,0) -- (-2.5,-0.1);
  \draw (-2.5,-0.08) node[below] {$s_m$};
  \draw (0,2) -- (0.1,2);
  \draw (0,2) node[anchor=west] {$E_m$};
  \draw[dashed,very thin] (0,2) -- (-2.5,2);
  \draw[dashed,very thin] (0,1) -- (-3.85,1);
\end{scope}
\end{tikzpicture}
\caption{Sketch of a generic $E(\cdot)$ where $E_c<E_m$}\label{fig:1}
\end{figure}

We shall perform a detailed spectral analysis of the instantaneous operator $H_{\tau}(E)$ in  Section~\ref{spectral-analysis}. The essential spectrum of $H_{\tau}(E)$ is $[0,\infty)$. Furthermore, it is proved that for any value of $\tau\neq0$ there exists a critical value $E_c>0$ such that the operator $H_{\tau}(E)$ has exactly one simple negative eigenvalue for every $E<E_c$.  Moreover, $H_{\tau}(E_c)$ has the eigenvalue $0$ embedded at the threshold.

If the operator $H_{\tau}(E(s))$ has an eigenvalue $\lambda(E(s))$ for a given value of $s$ we let $P(E(s))$ denote the corresponding eigenprojection and $\Psi(E(s))$ a normalized eigenfunction.
For $s$ equal to either $s_c$ or $s_c'$ we use the shorthand notation $P_c = P(E(s_c))= P(E(s_c'))$. Any `critical' eigenfunction is denoted by $\Psi_c$.

The time-evolution of $H_{\tau} (E(\eta t))$ is given by the time-dependent Schr\"o\-di\-nger equation
\begin{equation*}
i \frac{\partial}{\partial t} U_{\eta} (t,t_0) = H_{\tau} (E(\eta t)) U_{\eta} (t,t_0),\quad U_{\eta} (t_0,t_0) = {\rm Id},
\end{equation*}
for $t, t_0 \in \mathbb{R}$. Now we are prepared to state the main result of our paper:

\begin{theorem}\label{thm-horia}
Let  $\varphi$  be as in Assumption~\emph{\ref{cond: coupling 1}} and $E(\cdot)$ as in Assumption~\emph{\ref{cond: switching 1}}.
Then there exists a $\tau_0 > 0$, which depends on the choice of $\varphi$ and $E(\cdot)$, such that for every fixed $\tau\in (0, \tau_0)$ the following statements hold:
\begin{enumerate}

\item[{\rm (i)}] {\bf (Spectral analysis at the threshold) }

There is a critical dot energy $E_c\in (0,E_m)$ such that $\lambda(E_c) = 0$ is an embedded simple eigenvalue of $H_{\tau} (E_c)$ with corresponding  eigenprojection $P_c$. For every $E<E_c$ there exists a unique discrete negative eigenvalue $\lambda(E)$ corresponding to a smooth eigenprojection $P(E)$ with
\begin{align}\label{horia6}
\norm {P'(E)} \leq \frac{C}{(E_c-E)^{3/4}}\quad {\rm and}\quad \lim_{E\uparrow E_c}\norm {P(E)-P_c} =0.
\end{align}

\item[{\rm (ii)}] {\bf (An adiabatic theorem up to the threshold) }

There exist two positive constants $C$ and $\eta_0$, and an exponent $\alpha(\nu)\in [1/13,1]$ such that for any $\eta < \eta_0$,
\begin{align}\label{horia1}
\norm{ U_{\eta} ( s_c/\eta, -1/\eta ) P(E(-1)) U_{\eta}^* ( s_c/\eta, -1/\eta ) - P_c } &\leq C\eta^{\alpha(\nu)},\nonumber\\
\norm{ U_{\eta} ( s_c'/\eta, 0 ) P(E(0)) U_{\eta}^* ( s_c'/\eta, 0 ) - P_c } &\leq C\eta^{\alpha(\nu)}. 
\end{align}
If $\nu\geq 7$ then $\alpha(\nu)=1$. 

\item[{\rm (iii)}] {\bf (Zero survival probability when the instantaneous bound state becomes a `true resonance' for a while) }

There exists a class of functions $\varphi$ for which the instantaneous Hamiltonian $H_\tau(E)$ has purely absolutely continuous spectrum when $E_c<E\leq E_m$, and
\begin{align}\label{thm: main-4}
\lim_{\eta \downarrow 0} \abs{ \ip{\Psi (E(0))}{U_{\eta} ( 0, -1/\eta ) \Psi (E(-1)) } }^2 = 0.
\end{align}

\item[{\rm (iv)}] {\bf (When the adiabatic theorem holds all the way) }

Assume that $\widehat{\varphi}(k) = 0$ on $B_\delta(0)$ for some $\delta > 0$ and that $0 < E_m <  \delta^2$. 
Then $H_\tau(E(s))$ has a simple eigenvalue $\lambda(E(s))$ for every $s\in[-1,0]$. The eigenvalue is discrete and negative outside $[s_c,s_c']$ and embedded in the continuous spectrum inside $[s_c,s_c']$. Moreover,
\begin{align}\label{horia4}
\lim_{\eta \downarrow 0} \abs{ \ip{\Psi (E(0))}{U_{\eta} ( 0, -1/{\eta} ) \Psi (E(-1)) } }^2 = 1.
\end{align}

\item[{\rm (v)}]  {\bf (For which microscopic times does a critical eigenvector generally survive in the continuum?) }

There is a constant $K >0$ such that for every $\alpha >0$ there exists $\eta_0(\alpha)>0$ such that for every $\eta<\eta_0(\alpha)$ and $s \in ( s_c, s_c' )$ with $\abs{s - s_c} \leq \alpha  \eta^{\frac{4}{2\nu+7}}$,
\begin{align}\label{thm: main-3}
\abs{ \ip{\Psi_c}{U_{\eta} ( s/\eta, s_c/\eta) \Psi_c} }^2\geq 1-\alpha K.
\end{align}
\end{enumerate}
\end{theorem}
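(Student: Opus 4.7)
The plan is to reduce the bound to an estimate on $\|P_c^\perp\Psi(t)\|^2$, where $\Psi(t):=U_\eta(t,s_c/\eta)\Psi_c$, since by unitarity $1-|\langle\Psi_c,\Psi(t)\rangle|^2=\|P_c^\perp\Psi(t)\|^2$. Writing $H_\tau(E(\eta t))=H_c+V(\eta t)$ with $H_c:=H_\tau(E_c)$ and the rank-one $V(\eta t):=(E(\eta t)-E_c)\ket{\varsigma}\bra{\varsigma}$, and using $H_c\Psi_c=0$, Duhamel's formula produces
\begin{align*}
P_c^\perp\Psi(t)=-i\int_{s_c/\eta}^t (E(\eta t')-E_c)\,\langle\varsigma,\Psi(t')\rangle\,e^{-i(t-t')H_c}\ket{\varsigma}_\perp\,dt',
\end{align*}
with $\ket{\varsigma}_\perp:=P_c^\perp\ket{\varsigma}$. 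The source is small of size $|s-s_c|$.

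I would then pass to the spectral representation of $H_c$ on $P_c^\perp\mathcal{H}$. Writing $\mu_{\varsigma_\perp}$ for the spectral measure of $\ket{\varsigma}_\perp$, Parseval gives
\begin{align*}
\|P_c^\perp\Psi(t)\|^2=\int_0^\infty |F(\lambda)|^2\,d\mu_{\varsigma_\perp}(\lambda),\qquad F(\lambda):=\int_{s_c/\eta}^t (E(\eta t')-E_c)\,\langle\varsigma,\Psi(t')\rangle\,e^{-it'\lambda}\,dt'.
\end{align*}
The threshold analysis of Section~\ref{spectral-analysis}, together with $\widehat\varphi(k)\sim|k|^\nu$ from Assumption~\ref{cond: coupling 1}, yields the sharp asymptotic $d\mu_{\varsigma_\perp}/d\lambda\sim\lambda^{\nu+1/2}$ as $\lambda\downarrow 0$ (via the change of variables $\lambda=|k|^2$). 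For $F$ itself I would combine the trivial bound $|F(\lambda)|\le C(s-s_c)^2/\eta$ with a single integration by parts in $s':=\eta t'$; the boundary term at $s_c$ vanishes because $E(s_c)=E_c$, yielding $|F(\lambda)|\le C(s-s_c)/\lambda$ at zeroth order in the approximation $\langle\varsigma,\Psi\rangle\approx\bar c$. Splitting the spectral integral at the crossover $\lambda_0=\eta/(s-s_c)$ produces
\begin{align*}
\|P_c^\perp\Psi(t)\|^2\le C(s-s_c)^{5/2-\nu}\eta^{\nu-1/2}+C(s-s_c)^2.
\end{align*}
For $\nu\ge 1$ and $|s-s_c|\ge \eta$ the second term dominates, so $|s-s_c|\le\alpha\eta^{4/(2\nu+7)}$ implies $\|P_c^\perp\Psi(t)\|^2\le C\alpha^2\eta^{8/(2\nu+7)}$, which is $\le\alpha K$ as soon as $\eta<\eta_0(\alpha):=(K/(C\alpha))^{(2\nu+7)/8}$.

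The main obstacle will be handling the self-consistent factor $\langle\varsigma,\Psi(t')\rangle$ inside $F$: its time derivative $\tfrac{d}{dt'}\langle\varsigma,\Psi\rangle = -i\langle H_\tau(E(\eta t'))\varsigma,\Psi\rangle$ is only $O(1)$ on microscopic scales, which naively destroys the integration by parts bound. I would resolve this by writing $\langle\varsigma,\Psi\rangle=\bar c+\langle\varsigma,\Psi-\Psi_c\rangle$ and controlling the deviation via a preliminary bootstrap that uses the trivial Duhamel estimate on $\|\Psi-\Psi_c\|$; the resulting correction feeds into $F$ at higher order and does not affect the dominant scaling. The sharp $\lambda^{\nu+1/2}$ threshold exponent of $d\mu_{\varsigma_\perp}$ is the essential analytic input and is the technical heart borrowed from Section~\ref{spectral-analysis}.
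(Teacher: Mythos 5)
Your proposal addresses only part (v) of the theorem. Parts (i)--(iv) --- the threshold spectral analysis, the adiabatic theorem up to the threshold, the zero-survival-probability result (iii), which is the paper's main statement, and the cut-off case (iv) --- are not touched at all, and even your argument for (v) silently borrows (i) (existence and form of $P_c$, $\Psi_c$) as input. So as a proof of the stated theorem it is at best a fragment. Within (v) your route (Duhamel around the fixed critical Hamiltonian $H_c=H_\tau(E_c)$ plus a spectral-measure estimate for $P_c^\perp\varsigma$) is genuinely different from the paper's, which never works at criticality: the paper cuts off $\widehat{\varphi}$ on $B_\delta(0)$, gains a uniform spectral gap of order $\delta^2$ for the reduced Hamiltonian, applies the gapped adiabatic theorem there (Lemmas~\ref{lemma-hc-2}--\ref{lemma-hc-4}), compares the two unitary evolutions via a Dyson bound $C\abs{t-t'}\delta^{\nu+3/2}$ (Lemma~\ref{lemma-hc-5}), and chooses $\delta=\eta^{2/(2\nu+7)}$, which is where the exponent $4/(2\nu+7)$ comes from.

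Two steps of your argument fail as written. First, the ``essential analytic input'' is wrong: by \eqref{eq: representation of R} the relevant density is $\frac{1}{\pi}\im\ip{\varsigma}{R(\lambda+i0)\varsigma}=\frac{1}{\pi}\im F(\lambda+i0,E_c)^{-1}$, and at criticality $\re F(\lambda+i0,E_c)=-(1-\tau^2g_2)\lambda+o(\lambda)$ vanishes linearly (this is exactly what produces the threshold eigenvalue), while $\im F\sim-\lambda^{\nu+1/2}$ by \eqref{horia21}; hence $d\mu_{\varsigma_\perp}/d\lambda\sim\lambda^{\nu+1/2}/\abs{F}^2\sim\lambda^{\nu-3/2}$, not $\lambda^{\nu+1/2}$. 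Your exponent is the one for the decoupled (or overcritical, Assumption~\ref{cond: coupling 2}) operator and ignores the resonant denominator that makes the problem critical. Second, the integration-by-parts bound $\abs{F(\lambda)}\leq C(s-s_c)/\lambda$ and the bootstrap meant to justify it do not hold: $\ip{\Psi_c}{\Psi(t)}$ winds in phase at rate $\sim c^2\abs{E(\eta t)-E_c}$ with $c=\ip{\varsigma}{\Psi_c}\neq0$, accumulating a total phase of order $(s-s_c)^2/\eta=\alpha^2\eta^{(1-2\nu)/(2\nu+7)}\to\infty$ for every $\nu\geq1$ at the scale $s-s_c\sim\eta^{4/(2\nu+7)}$. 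Consequently $\norm{\Psi(t')-\Psi_c}$ is $O(1)$ (it oscillates), $\ip{\varsigma}{\Psi(t')}$ is not approximately the constant $\bar c$, and the ``trivial Duhamel estimate'' you invoke for the deviation is itself $\sim(s-s_c)^2/\eta$, which diverges. A repair would require factoring out this dynamical phase, which moves the effective stationary-phase crossover from $\eta/(s-s_c)$ to $\sim(s-s_c)$ and invalidates the numerology you report; your final bound $C\alpha^2\eta^{8/(2\nu+7)}$ is an artifact of the two incorrect inputs. The paper's cut-off construction avoids both difficulties, which is why it is the proof given there.
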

We now give remarks and comments on the theorem.
\begin{remark}
The class of functions in Theorem~\ref{thm-horia}(iii) are those satisfying  Assumption~\ref{cond: coupling 2}, in addition to 
Assumption~\ref{cond: coupling 1}.
Examples  are those that satisfy either  $\nabla \widehat{\varphi}(0) \neq 0$ or $\widehat{\varphi}(k) = e^{-\abs{k}^{-2}}$ near $k=0$, see the discussion after Assumption~\ref{cond: coupling 2}.
\end{remark}

\begin{remark}\label{horia3}
\ 
\begin{enumerate}
\item The bounds in \eqref{horia1} show that the instantaneous bound state is a good approximation to the `true' Heisenberg evolution up to the threshold. 

\item Roughly speaking, the adiabatic theorem without a gap \cite{AE, T1, T2} states that if one can define an instantaneous spectral projection which can be followed smoothly as a function of $s$ through possible crossings with other parts of the spectrum, then the instantaneous spectral projection always stays close to its Heisenberg evolution. In our case, if we choose for example $\varphi$ such that $\widehat{\varphi}(k)=\frac{k_1}{(k^2+1)^2}$ (here $\nu=1$ and $\nabla \widehat{\varphi}(0)\neq 0$) the instantaneous eigenvalue turns into a resonance with a non-zero imaginary part if $s\in (s_c,s_c')$. Of course, the instantaneous residue of the resolvent is no longer an orthogonal projection during this time. Furthermore, \eqref{thm: main-4} states that the critical eigenprojection becomes orthogonal to its adiabatic Heisenberg evolution and remains so when the bound state comes out of the continuum. 

\item The limit in~(\ref{thm: main-3}) shows that the critical eigenvector survives for a long {\bf microscopic} time if the coupling function $\widehat{\varphi}$ has a high-order zero at $k=0$. But this does not suffice to decide what happens with the state when $s=s_c'$. For example, when $\varphi \in L^{2,w} ( \mathbb{R}^3 )$ and $\widehat{\varphi}(k) = e^{-\abs{k}^{-2}}$ near $k=0$ we know that (\ref{thm: main-3}) holds for all $\nu$, but eventually the survival probability is zero. On the other hand, if $\widehat{\varphi}$ is zero on a neighborhood of $k=0$, then the critical eigenvector can reappear almost intact at  $s=s_c'$ and return to the initial state as in (iv). 

\item Our result does not really depend on the fact that the Laplacian is three-dimensional. Similar results hold true for all odd dimensions. More precisely, if $d\geq 5$ then we do not need to assume that $\widehat{\varphi}(0)=0$. If $d=1$ then we need $\nu\geq 2$, i.e. $\widehat{\varphi}$ must have a Taylor expansion near the origin starting with  a quadratic or higher order term. These conditions are closely related to the smoothness of the resolvent of the coupled system near the threshold $z=0$. As a rule of thumb, if the coupled resolvent is smooth enough near the threshold and  if the eigenvalue couples in a non-trivial way with the continuum, then not only will the systems have a bound state embedded at the threshold, but also its Heisenberg evolution will become orthogonal to the critical eigenprojection after diving in the continuum.

\item One can also allow the quantum dot to be modeled by $\mathbb{C}^N$ with $N>1$. Imagine a situation in which the initial instantaneous Hamiltonian has $N$ discrete eigenvalues and, during the adiabatic evolution, one eigenvalue dives in the continuum while the other $N-1$ stay away from it but can cross among themselves. Then one can prove that the `diving' eigenvalue is `lost' while the other $N-1$ return to the initial state.

\item This problem is closely related to the so-called `adiabatic pair creation', see \cite{N,PD,Th}.
Our model is much simpler than the one in \cite{PD} and allows for much more explicit dispersive estimates and a very detailed threshold analysis. At the same time, our model can capture quite different adiabatic behaviors depending on the properties of the coupling function $\varphi$. In spite of the simplicity of the model, the method we use is robust and can be generalized. An outlook on the Dirac and $N$-body Schr\"odinger case with $N\geq 2$ is presented in Section~\ref{sec: Dirac}. A detailed study will be done elsewhere. 

\item 
The case in which the bound state becomes a resonance at the threshold (in the sense that the full resolvent at criticality has an expansion around zero with a leading term of order $z^{-1/2}$) is still open.
\end{enumerate}
\end{remark}

\section{Proof of Theorem \ref{thm-horia}{\rm (i)}}\label{spectral-analysis}
The spectral properties close to the critical value have been studied for some different models in \cite{CJN,PD}. In this section we give a detailed spectral ana\-ly\-sis of the family of operators $H_{\tau}(E)$. If $\tau=0$ the operator $H_0(E)$ has $\sigma_{\rm ess}(H_0(E))=[0,\infty)$, and it is absolutely continuous. It has a simple eigenvalue at $E$, which is either isolated if $E<0$ or embedded in the continuous spectrum if $E\geq 0$. Since the perturbation is of finite rank we have $\sigma_{\rm ess}(H_{\tau}(E))=[0,\infty)$ also for $\tau\neq0$.

We recall that we always assume $\varphi\in L^2(\mathbb{R}^3)$. We write $r_0(z)=(-\Delta-z)^{-1}$ for $z\not\in[0,\infty)$.

We recall some results from~\cite{JK,JN2}.
We start with the asymptotic expansion of $r_0(z)$ which is a simple consequence of the Taylor formula for the kernel of $r_0(z)$.
To this end we change the variable to $\kappa = - i \sqrt{z}$ with the choice $\im\sqrt{z} > 0$ such that $\kappa^2=-z$.
The kernel of $r_0(-\kappa^2)$ is given by $e^{-\kappa \abs{x-y}}/(4\pi\abs{x-y})$.
We recall the following lemma whose proof can be found in~\cite{JK}. 
\begin{lemma}\label{lem: expansion of r_0}
Let $n\geq1$ and $w>n+\frac32$. Then
\begin{align*}
r_0(-\kappa^2)=\sum_{j=0}^n \kappa^jG_j+\mathcal{O}(\kappa^{n+1})
\end{align*}
as $\kappa\to0$ with $\re \kappa \geq 0$ in the norm topology of $\mathcal{B}(L^{2,w}(\mathbb{R}^3),L^{2,-w}(\mathbb{R}^3))$.
The expansion coefficients are operators given by the integral kernels
\begin{align*}
G_j(x,y)=\frac{(-1)^j}{4\pi j!}\abs{x-y}^{j-1}
\end{align*}
and have the mapping properties
\begin{align*}
G_0 &\in \mathcal{B}(L^{2,w_1}(\mathbb{R}^3),L^{2,-w_2}(\mathbb{R}^3))\quad \text{with}\quad w_1,w_2>\tfrac12 \quad \text{and}\quad w_1+w_2\geq2,\\
G_j &\in \mathcal{B}(L^{2,w_1}(\mathbb{R}^3),L^{2,-w_2}(\mathbb{R}^3))\quad \text{with}\quad w_1,w_2>j+\tfrac12.
\end{align*}
\end{lemma}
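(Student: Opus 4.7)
The plan is to start from the explicit integral kernel $r_0(-\kappa^2)(x,y) = e^{-\kappa\abs{x-y}}/(4\pi\abs{x-y})$, analytic in the open half-plane $\re\kappa > 0$ and continuous up to the closed half-plane $\re\kappa\geq 0$. I would Taylor-expand the scalar function $\zeta \mapsto e^{-\zeta}$ in integral-remainder form at $\zeta = \kappa\abs{x-y}$ and use the crucial bound $\abs{e^{-\kappa s}} \leq 1$, valid for $\re\kappa\geq 0$ and $s\geq 0$, to obtain the pointwise decomposition
\begin{equation*}
\frac{e^{-\kappa\abs{x-y}}}{4\pi\abs{x-y}} = \sum_{j=0}^{n} \kappa^{j} G_{j}(x,y) + \tilde R_{n}(\kappa;x,y),
\end{equation*}
with $G_j$ as in the statement, together with the uniform pointwise bound $\abs{\tilde R_n(\kappa;x,y)} \leq C\abs{\kappa}^{n+1}\abs{x-y}^{n}$.

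To upgrade this pointwise kernel bound to an operator-norm estimate in $\mathcal{B}(L^{2,w}(\mathbb{R}^3), L^{2,-w}(\mathbb{R}^3))$, I would control the Hilbert-Schmidt norm of the weighted remainder. Writing $\langle x\rangle = (1+x^2)^{1/2}$ and using $\abs{x-y}^{2n} \leq C_{n}(\langle x\rangle^{2n} + \langle y\rangle^{2n})$, the weighted HS norm squared of $\tilde R_n$ is controlled by a sum of products of one-variable integrals of the form $\int\langle x\rangle^{-2w+2n}\,dx \cdot \int\langle y\rangle^{-2w}\,dy$, which are finite in $\mathbb{R}^3$ exactly when $w > n + 3/2$; this yields the claimed $\mathcal{O}(\kappa^{n+1})$ bound. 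The same Hilbert-Schmidt computation applied to the polynomial kernels $\abs{x-y}^{j-1}/(4\pi j!)$ of $G_j$, $j\geq 1$, gives the stated mapping property under the sharp weight condition $w_1, w_2 > j + 1/2$.

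The case $j=0$ is genuinely different, since the kernel $1/(4\pi\abs{x-y})$ has a non-integrable singularity on the diagonal and Hilbert-Schmidt fails outright. Here I would decompose the kernel into a short-range piece cut off by $\chi_{\abs{x-y}\leq 1}$ and its long-range complement. The long-range piece is Hilbert-Schmidt under mild weight conditions, while the short-range piece is handled via the Hardy-Littlewood-Sobolev estimate $\norm{(-\Delta)^{-1}f}_{L^6} \leq C\norm{f}_{L^{6/5}}$ combined with weighted H\"older inequalities, which together produce the sharp Jensen-Kato condition $w_1, w_2 > 1/2$ and $w_1 + w_2 \geq 2$.

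The main obstacle is to achieve exactly these sharp weight exponents, since the Hilbert-Schmidt computations are borderline convergent. A second, more subtle point is that the remainder estimate must be uniform in $\kappa$ all the way to the boundary $\re\kappa = 0$ and through $\kappa\to 0$; this is precisely what forces the uniform bound $\abs{e^{-\kappa s}}\leq 1$ in place of a Lagrange-style remainder whose constant would depend on $\arg\kappa$. Finally, admitting the borderline equality $w_1 + w_2 = 2$ in the $G_0$ mapping property is the most delicate step and is handled by a duality argument in weighted spaces, as carried out in \cite{JK}.
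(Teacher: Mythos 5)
The paper does not prove this lemma at all: it is recalled verbatim from Jensen--Kato \cite{JK}, so there is no internal proof to compare against. Your sketch is essentially the standard argument behind the cited result, and it is sound in its main steps: Taylor expansion of $e^{-\kappa\abs{x-y}}$ with integral remainder and the bound $\abs{e^{-\kappa s}}\leq 1$ for $\re\kappa\geq0$, $s\geq0$, gives the pointwise remainder bound $C\abs{\kappa}^{n+1}\abs{x-y}^{n}$ uniformly up to the imaginary axis, and the weighted Hilbert--Schmidt computation with $\abs{x-y}^{2n}\leq C(\langle x\rangle^{2n}+\langle y\rangle^{2n})$ correctly yields the $\mathcal{O}(\kappa^{n+1})$ operator bound for $w>n+\tfrac32$ and the mapping property $w_1,w_2>j+\tfrac12$ for $G_j$, $j\geq1$. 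Two caveats on the $G_0$ part: first, in $\mathbb{R}^3$ the squared kernel $\abs{x-y}^{-2}$ is locally integrable, so the failure of the Hilbert--Schmidt bound for $G_0$ at the sharp weights is not caused by the diagonal singularity but by insufficient decay at infinity precisely on the borderline $w_1+w_2=2$; second, the route via $\norm{(-\Delta)^{-1}f}_{L^6}\leq C\norm{f}_{L^{6/5}}$ plus weighted H\"older only delivers $w_1,w_2>1$, not the full sharp region $w_1,w_2>\tfrac12$, $w_1+w_2\geq2$, so the sharp (and in particular the borderline) statement genuinely requires the interpolation/duality analysis of \cite{JK}, which you, like the paper, ultimately invoke. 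Since the paper itself only cites \cite{JK} for the lemma, and the application in the paper uses $\varphi\in L^{2,w}$ for all $w>0$ so the sharp weights are never needed, this reliance is acceptable.
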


In the following we also use the notation $\ip{\cdot\,}{\,\cdot}$ for the inner product viewed as a duality between $L^{2,w}(\mathbb{R}^3)$ and $L^{2,-w}(\mathbb{R}^3)$.

There is an important relation between $G_0$ and $G_2$ given as follows:

\begin{proposition}\label{prop: relation between G_0 and G_2}
Assume $f,g\in L^{2,w}(\mathbb{R}^3)$ with $w>\frac52$ such that $\ip{f}{1}=0$ and $\ip{g}{1}=0$.
Then $G_0f,G_0g\in L^{2}(\mathbb{R}^3)$ and
$\ip{f}{G_2g}=-\ip{G_0f}{G_0g}$.
\end{proposition}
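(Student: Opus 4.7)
The plan is to reduce the identity to a Fourier calculation where the relation $G_2=-G_0^2$ appears transparently via the symbols $\abs{k}^{-4}=(\abs{k}^{-2})^2$. The zero-integral hypotheses on $f$ and $g$ are exactly what is needed to make this rigorous, since the kernel $\abs{x-y}/(8\pi)$ of $G_2$ has Fourier transform equal to $-\abs{k}^{-4}$ only modulo a distribution supported at the origin, and $G_0^2$ is not bounded on $L^2$.

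For the first claim, the condition $w>\tfrac52$ together with Cauchy--Schwarz gives $(1+\abs{x})f,\,(1+\abs{x})g\in L^1(\mathbb{R}^3)$, so $\widehat f$ and $\widehat g$ are $C^1$, and $\ip{f}{1}=\ip{g}{1}=0$ forces $\widehat f(0)=\widehat g(0)=0$. Hence $\widehat f(k),\widehat g(k)=O(\abs{k})$ near the origin, and $\widehat f(k)/\abs{k}^2\in L^2(\mathbb{R}^3)$: near zero the squared integrand is $O(\abs{k}^{-2})$, locally integrable in three dimensions, while at infinity $\widehat f$ is bounded and $\abs{k}^{-4}$ is integrable. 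Plancherel then yields $G_0 f,G_0 g\in L^2(\mathbb{R}^3)$, and one checks from the explicit kernel $1/(4\pi\abs{x-y})$ that these agree with the $L^{2,-w}$-valued objects produced by Lemma~\ref{lem: expansion of r_0}.

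For the identity, I introduce the exponential regularization $K_\varepsilon(x)=\tfrac{1}{8\pi}\abs{x}e^{-\varepsilon\abs{x}}\in L^1(\mathbb{R}^3)$, whose Fourier transform is computed in spherical coordinates to be
\[
\widehat{K_\varepsilon}(k)=\frac{3\varepsilon^2-\abs{k}^2}{(\varepsilon^2+\abs{k}^2)^3}.
\]
The double integral defining $\ip{f}{G_2 g}$ is absolutely convergent because $\abs{x-y}\le\abs{x}+\abs{y}$ and both $(1+\abs{x})f,(1+\abs{x})g$ lie in $L^1$, so dominated convergence gives $\iint \overline{f(x)}K_\varepsilon(x-y)g(y)\,dx\,dy\to\ip{f}{G_2 g}$ as $\varepsilon\downarrow 0$. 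Parseval rewrites this (up to the normalization constant of the Fourier transform) as $\int \widehat{K_\varepsilon}(k)\,\overline{\widehat f(k)}\,\widehat g(k)\,dk$. Splitting $\widehat{K_\varepsilon}=-\abs{k}^2(\varepsilon^2+\abs{k}^2)^{-3}+3\varepsilon^2(\varepsilon^2+\abs{k}^2)^{-3}$, the first piece converges to $-\abs{k}^{-4}\overline{\widehat f}\,\widehat g$ by dominated convergence (with integrable dominant coming from $\widehat f/\abs{k}^2,\widehat g/\abs{k}^2\in L^2$), while the second piece is $O(\varepsilon)$ by the rescaling $k=\varepsilon k'$ combined with the bound $\abs{\overline{\widehat f}\,\widehat g}=O(\abs{k}^2)$ near $k=0$ — precisely where the zero-mean hypotheses enter. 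Plancherel applied to $\ip{G_0 f}{G_0 g}=\int\abs{k}^{-4}\overline{\widehat f}\,\widehat g\,dk$ (up to the same constant) then gives $\ip{f}{G_2 g}=-\ip{G_0 f}{G_0 g}$.

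The main obstacle is the distributional nature of the Fourier transform of the growing kernel $\abs{x-y}$: the symbol convergence $\widehat{K_\varepsilon}\to-\abs{k}^{-4}$ holds only pointwise for $k\neq 0$, and the concentrating mass at the origin carried by $3\varepsilon^2(\varepsilon^2+\abs{k}^2)^{-3}$ encodes a delta-like correction. The assumptions $\ip{f}{1}=\ip{g}{1}=0$ are exactly what makes this concentrating mass invisible in the limit.
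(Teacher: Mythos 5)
Your proof is correct; note, however, that the paper does not actually prove this proposition — it quotes it and defers the proof to \cite{JK,JN2} — so there is no in-paper argument to match. One standard route in those references stays on the resolvent side: differentiate $\ip{f}{r_0(-\kappa^2)g}$ in $\kappa$, use $\tfrac{d}{d\kappa}r_0(-\kappa^2)=-2\kappa\, r_0(-\kappa^2)^2$ together with the expansion of Lemma~\ref{lem: expansion of r_0} (the $G_1$ term drops out because $G_1$ is proportional to $\ket{1}\bra{1}$ and $\ip{f}{1}=\ip{1}{g}=0$), divide by $2\kappa$, and let $\kappa\downarrow 0$ using $r_0(-\kappa^2)f\to G_0f$, $r_0(-\kappa^2)g\to G_0g$ in $L^2$ — the same strong convergence the paper re-derives for $\varphi$ in Section~\ref{spectral-analysis}. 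Your Fourier-side regularization is a genuinely different, self-contained alternative, and the steps check out: $\mathcal{F}(e^{-\varepsilon\abs{x}})(k)=8\pi\varepsilon/(\varepsilon^2+\abs{k}^2)^2$ and $\abs{x}e^{-\varepsilon\abs{x}}=-\partial_\varepsilon e^{-\varepsilon\abs{x}}$ give exactly your formula for $\widehat{K_\varepsilon}$; the first piece is dominated by $\abs{\widehat f}\,\abs{\widehat g}/\abs{k}^4\in L^1$, which your first paragraph supplies; and the concentrating piece, whose total mass is of order $\varepsilon^{-1}$, is $O(\varepsilon)$ precisely because both $\widehat f$ and $\widehat g$ vanish at the origin (vanishing of only one would give $O(1)$, which is why both orthogonality hypotheses are needed). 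Two small points to make explicit in a final write-up: the bound $\abs{\widehat f(k)\widehat g(k)}\leq C\abs{k}^2$ in fact holds for all $k$, since $\nabla\widehat f,\nabla\widehat g$ are globally bounded ($\abs{x}f,\abs{x}g\in L^1$) and $\widehat f(0)=\widehat g(0)=0$ — as written you assert it only near $k=0$, so either invoke the global bound or split off the region $\abs{k}\geq\delta_0$, which contributes $O(\varepsilon^2)$; and the Fourier normalizations should be fixed once and for all (non-unitary transform for the convolution kernel, unitary for $f,g$), after which both sides of the identity carry the same (trivial) constant, so the equality is exact rather than merely "up to a constant".
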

The proof can be found in~\cite{JK,JN2}. A related result is the following lemma:
\begin{lemma}\label{prop-hc1}
Let $\varphi$ be as in Assumption~\emph{\ref{cond: coupling 1}}. Then the map 
\begin{align*}
(-\infty,0]\ni x \mapsto \ip{\varphi}{r_0(x)\varphi}\in \mathbb{R}
\end{align*}
is continuously differentiable. 
Moreover, there exists a constant $C$ such that for all $\kappa>0$,
\begin{align}\label{horia9}
\vert \ip{\varphi}{[r_0(-\kappa^2)]^2\varphi}\vert \leq C,\quad\vert \ip{\varphi}{[r_0(-\kappa^2)]^3\varphi}\vert \leq C\kappa^{-1},\quad \vert \ip{\varphi}{[r_0(-\kappa^2)]^4\varphi}\vert \leq C\kappa^{-3}.
\end{align}
\end{lemma}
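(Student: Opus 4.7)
The plan is to pass to momentum space. Since $r_0(-\kappa^2)$ acts as multiplication by $(\abs{k}^2+\kappa^2)^{-1}$ on the Fourier side, Plancherel gives
\[
\ip{\varphi}{[r_0(-\kappa^2)]^n\varphi} = \int_{\mathbb{R}^3}\frac{\abs{\widehat\varphi(k)}^2}{(\abs{k}^2+\kappa^2)^n}\,dk
\]
for every integer $n\ge 1$. Under Assumption~\ref{cond: coupling 1} one has the pointwise control $\abs{\widehat\varphi(k)}\le C\min(\abs{k}^{\nu},1)$, together with $\widehat\varphi\in L^2$. Both the differentiability statement and the three estimates in \eqref{horia9} will be read off from this single representation.

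For the $C^1$ statement, write $x=-\kappa^2$ and differentiate formally under the integral sign to obtain the integrand $\abs{\widehat\varphi(k)}^2/(\abs{k}^2-x)^2$. In spherical coordinates the limit integrand $\abs{\widehat\varphi(k)}^2/\abs{k}^4$ has a radial density behaving like $\abs{k}^{2\nu-2}\,d\abs{k}$ near zero, hence it is integrable precisely because $\nu\ge 1$; the tail at infinity is harmless since $\widehat\varphi\in L^2$. This integrable majorant dominates the derivand uniformly for $x\le 0$, so dominated convergence justifies differentiating under the integral and yields continuity of the derivative all the way up to $x=0$.

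For the moment bounds I split the domain into $\{\abs{k}\le\kappa\}$ and $\{\abs{k}>\kappa\}$. On the inner ball one uses $(\abs{k}^2+\kappa^2)^{-n}\le\kappa^{-2n}$ combined with $\int_{\abs{k}\le\kappa}\abs{\widehat\varphi(k)}^2\,dk\le C\kappa^{2\nu+3}$, producing a contribution of order $\kappa^{2\nu+3-2n}$. On the exterior one uses $(\abs{k}^2+\kappa^2)^{-n}\le\abs{k}^{-2n}$; the piece $\abs{k}>1$ is bounded by $\norm{\widehat\varphi}^2=1$, and the annulus $\kappa<\abs{k}\le 1$ contributes at most $C\int_\kappa^1 r^{2\nu-2n+2}\,dr$. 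In the worst case $\nu=1$ this gives a uniform constant for $n=2$ and bounds of order $\kappa^{-1}$ and $\kappa^{-3}$ for $n=3$ and $n=4$, respectively, matching \eqref{horia9} exactly; larger values of $\nu$ only improve these estimates.

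The main obstacle is essentially bookkeeping: one must verify that the inner and outer contributions produce matching $\kappa$-powers in the borderline case $\nu=1$, which is where the hypothesis is tight. Once the Plancherel representation is in place, everything reduces to elementary one-dimensional radial integration, and the assumption $\nu\ge 1$ enters precisely at the point where the integrand is most singular at $k=0$.
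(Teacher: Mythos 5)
Your proof is correct but takes a genuinely different route from the paper's. The paper works entirely in position space: starting from the explicit kernel representation
\[
\ip{\varphi}{r_0(-\kappa^2)\varphi}=\int_{\mathbb{R}^6}\frac{e^{-\kappa\abs{x-y}}}{4\pi\abs{x-y}}\overline{\varphi(x)}\varphi(y)\,dx\,dy,
\]
it differentiates once in $\kappa$ and uses the cancellation $\int\varphi=0$ (equivalently $\widehat{\varphi}(0)=0$) to show that $\ip{\varphi}{[r_0(-\kappa^2)]^2\varphi}=-\frac1{8\pi}\int_0^1\!\!\int e^{-t\kappa|x-y|}\overline{\varphi(x)}\varphi(y)|x-y|\,dx\,dy\,dt$ is $C^\infty$ in $\kappa$ with bounded derivatives; the remaining bounds then fall out of the identities that one or two more $\kappa$-derivatives of this expression equal $4\kappa\ip{\varphi}{[r_0]^3\varphi}$ and $4\ip{\varphi}{[r_0]^3\varphi}+24\kappa^2\ip{\varphi}{[r_0]^4\varphi}$ respectively, so boundedness propagates down. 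You instead pass to momentum space by Plancherel and control everything directly from $|\widehat{\varphi}(k)|\le C\min(|k|^\nu,1)$ by splitting the radial integral at $|k|=\kappa$. Both proofs hinge on the same assumption ($\nu\ge1$), but encode it differently: the paper through the cancellation $\int\varphi=0$, you through the small-$k$ decay of $\widehat{\varphi}$. Your approach is more elementary and self-contained, and it exposes the $\nu$-dependence explicitly — for $\nu\ge 2$ your computation already shows the moment bounds in \eqref{horia9} improve, a fact the paper does not extract at this point but implicitly uses later (in the $\nu\ge 7$ argument of Section 3.2, which is carried out precisely in this momentum-space form). The paper's bootstrap is shorter but more opaque and is really tuned to the borderline case $\nu=1$. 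One minor bookkeeping remark for your write-up: the split $\{|k|\le\kappa\}\cup\{\kappa<|k|\le1\}\cup\{|k|>1\}$ requires $\kappa\le1$; for $\kappa\ge1$ all three estimates in \eqref{horia9} follow trivially from $\ip{\varphi}{[r_0(-\kappa^2)]^n\varphi}\le\kappa^{-2n}$.
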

\begin{proof}
We start with the identity
\begin{align}\label{horia7'}
\ip{\varphi}{r_0(-\kappa^2)\varphi}=\int_{\mathbb{R}^6}\frac{e^{-\kappa\abs{x-y}}}{4\pi\abs{x-y}}\overline{\varphi(x)}\varphi(y)dxdy.
\end{align}
Differentiating on both sides with respect to $\kappa$ and using that \begin{equation*}
\int_{\mathbb{R}^3}\varphi(x)dx=(2\pi)^{3/2}\widehat{\varphi}(0)=0
\end{equation*}
we have
\begin{align}\label{horia7}
\ip{\varphi}{[r_0(-\kappa^2)]^2\varphi}&=\frac{1}{8\pi\kappa}\int_{\mathbb{R}^6}e^{-\kappa\abs{x-y}}\overline{\varphi(x)}\varphi(y)dxdy\nonumber \\
&=-\frac{1}{8\pi}\int_0^1 \int_{\mathbb{R}^6}e^{-t\kappa\abs{x-y}}\overline{\varphi(x)}\varphi(y)\abs{x-y}dxdydt.
\end{align}
In particular, setting $u=-\kappa^2$, this shows that the functions given by $f(u)=\ip{\varphi}{r_0(u)\varphi}$ and $f'(u)=\ip{\varphi}{[r_0(u)]^2\varphi}$ are continuous for all $u\leq 0$. 

The right hand side of \eqref{horia7} is $C^\infty$ as a function of $\kappa$ and all its derivatives are bounded. Differentiating once with respect to $\kappa$ on the left hand side we get
\begin{equation*}
4\kappa \ip{\varphi}{[r_0(-\kappa^2)]^3\varphi}.
\end{equation*}
Hence $\ip{\varphi}{[r_0(-\kappa^2)]^3\varphi}$ diverges at most like $\kappa^{-1}$. Differentiating twice on the left hand side of \eqref{horia7} yields
\begin{equation*}
4\ip{\varphi}{[r_0(-\kappa^2)]^3\varphi}+24\kappa^2 \ip{\varphi}{[r_0(-\kappa^2)]^4\varphi},
\end{equation*}
a quantity which must be bounded as a function of $\kappa$. Hence $\ip{\varphi}{[r_0(-\kappa^2)]^4\varphi}$ might diverge at most like $\kappa^{-3}$. 
\end{proof}

\noindent We need a representation of the resolvent of $H_{\tau}(E) = H_0(E) + \tau V$ with $V=\begin{bmatrix} 0 & \ket{\varphi}\\ \bra{\varphi} & 0 \end{bmatrix}$. We use the notation $R(z)=(H_{\tau}(E)-z)^{-1}$ omitting explicit mention of the dependence on $\tau$, $\varphi$, and $E$.

Now let $z \in \mathbb{C}$ with $\im (z) \neq 0$.
As a first step we rewrite the resolvent $R(z)$ using the notation $r_0(z) = (-\Delta - z )^{-1}$ and $R_0(z) = (H_0(E) - z )^{-1}$.
Note that $r_0(z)$ is an operator on $L^2(\mathbb{R}^3)$ while $R(z)$ and $R_0(z)$ are operators on $\mathcal{H} = L^2(\mathbb{R}^3) \oplus \mathbb{C}$.
The deduced representation of the resolvent $R(z)$ comes from~\cite{JN1} and is a variant of the Feshbach formula. The Feshbach formula should more precisely be called the Schur--Livsic--Feshbach--Grushin formula.

The perturbation is factored as $V=wUw^*$ where $w\colon \mathbb{C}^2\to\mathcal{H}$ and $U\colon \mathbb{C}^2\to\mathbb{C}^2$ are given by
\begin{align*}
w= \begin{bmatrix}
\ket{\varphi} & 0 \\ 0 & 1
\end{bmatrix}
\quad\text{and}\quad
U = \begin{bmatrix}
0 & 1 \\  1 & 0
\end{bmatrix}.
\end{align*}
With this notation we define the complex $2\times2$-matrix
\begin{align}\label{eq: definition of M}
M(z) = U + \tau w^*R_0(z)w = \begin{bmatrix}
\tau\ip{\varphi}{r_0(z)\varphi} & 1\\
1 & \tau (E-z)^{-1}
\end{bmatrix}.
\end{align}
Then for $\im z\neq0$ we have the representation
\begin{equation*}
R(z) = R_0(z) - \tau R_0(z)wM(z)^{-1}w^{\ast}R_0(z).
\end{equation*}
The Feshbach map is defined as
\begin{equation}\label{def F}
F(z,E) = E - z - \tau^2\ip{\varphi}{r_0(z)\varphi}.
\end{equation}
We have $\det M(z) = -(E-z)^{-1} F(z,E)$ and
\begin{align*}
M(z)^{-1}=\frac{1}{\det M(z)}
\begin{bmatrix}
\tau(E-z)^{-1} & -1 \\
-1 & \tau\ip{\varphi}{r_0(z)\varphi}
\end{bmatrix}.
\end{align*}
Using this notation and \eqref{def F} we can rewrite the resolvent as
\begin{align}
R(z) &= 
\begin{bmatrix}
r_0(z) & 0 \\
0 & 0
\end{bmatrix}+
\frac{1}{F(z,E)}
\begin{bmatrix}
\tau^2 r_0(z)\ket{\varphi}\bra{\varphi} r_0(z)  & -\tau  r_0(z)\ket{\varphi}\\
-\tau \bra{\varphi}r_0(z) & 1
\end{bmatrix}.
\label{eq: representation of R}
\end{align}

It is clear from \eqref{eq: representation of R} that any  isolated eigenvalue must be a zero of $F(z,E)$. Assume $z=x$ is real and negative. Then
\begin{equation*}
\frac{\partial F}{\partial x}(x,E)=-1-\tau^2\ip{r_0(x)\varphi}{r_0(x)\varphi}.
\end{equation*}
Thus the derivative is negative for all negative $x$. We note that $F(x,E)$ is positive if $x$ is sufficiently negative.

Let $g_0=\lim_{z\to0,\,z\in(-\infty,0)}\ip{\varphi}{r_0(z)\varphi}$. This limit exists, see Lemma \ref{prop-hc1}, and we have $g_0=\ip{\varphi}{G_0\varphi}>0$. Set $E_c=\tau^2g_0$. Here we must choose $\tau$ smaller than some constant $\tau_0$ such that $E_c < E_m$.  Then the function $F(z,E)$  has a unique simple negative zero at $\lambda(E)<0$ for $E<E_c$. 
It is determined by the equation $F(\lambda(E),E) = 0$. From the implicit function theorem and \eqref{def F} we obtain
\begin{align}\label{horia8}
&\lambda'(E)=\big( 1+\tau^2\ip{\varphi}{[r_0(\lambda(E))]^2\varphi}\big)^{-1},\notag\\
&\lim_{E\uparrow E_c}\lambda(E)=\lambda(E_c)=0,\quad {\rm and}\quad \lambda(E)\sim -(E_c-E).
\end{align}

Again from \eqref{eq: representation of R} we see that $\lambda(E)$ is a simple pole of $R(z)$ and Cauchy's residue theorem yields the orthogonal Riesz projection corresponding to it:
\begin{align}\label{P(E)}
P(E) = \frac{1}{1+\tau^2\ip{\varphi}{[r_0(\lambda(E))]^2\varphi}}
\begin{bmatrix}
\tau^2r_0(\lambda(E))\ket{\varphi}\bra{\varphi} r_0(\lambda(E))
&
-\tau r_0(\lambda(E))\ket{\varphi}\\
-\tau \bra{\varphi} r_0(\lambda(E))
&
1
\end{bmatrix}.
\end{align}
We see that $P(E)=\ket{\Psi(E)}\bra{\Psi(E)}$ where a normalized eigenvector is given by
\begin{align}\label{Psi(E)}
\Psi(E) = \frac{1}
{\sqrt{1+\tau^2\ip{\varphi}{r_0(\lambda(E))^2\varphi}}}
\begin{bmatrix}
-\tau r_0(\lambda(E))\ket{\varphi} \\ 1
\end{bmatrix}.
\end{align}

Now let us prove the norm estimate on $P(E)$ in \eqref{horia6}. When we differentiate in \eqref{P(E)} there are two terms which might have a singular behavior when $E$ goes to $E_c$. One of these terms is proportional to the scalar $\ip{\varphi}{[r_0(\lambda(E))]^3\varphi}$ and the other singular term contains the vector $[r_0(\lambda(E))]^2\ket{\varphi}$. Hence the singular factors in $\norm{P'(E)}$ are
\begin{equation*}
\abs{\ip{\varphi}{[r_0(\lambda(E))]^3\varphi}}\quad {\rm and}\quad \norm{[r_0(\lambda(E))]^2\ket{\varphi}}=\sqrt{\ip{\varphi}{[r_0(\lambda(E))]^4\varphi}}.
\end{equation*}
However, the estimates in \eqref{horia9} show that these terms can diverge at most like $\abs{\lambda(E)}^{-3/4}$. Together with \eqref{horia8} this imply the norm bound on $P'(E)$ in \eqref{horia6}.

Next we study in detail the critical resolvent, i.e. when $E=E_c$, in a neighborhood of $\lambda(E_c)=0$. Using Lemma~\ref{lem: expansion of r_0} we write
\begin{align}\label{eq: expansion of F}
\ip{\varphi}{r_0(-\kappa^2)\varphi} = g_0 + \kappa g_1 + \kappa^2 g_2 + \mathcal{O}(\kappa^3)
\end{align}
with $g_j = \ip{\varphi}{G_j\varphi}$ for $j=0,1,2$. Note that Assumption~\ref{cond: coupling 1} implies $\ip{\varphi}{1} = 0$ such that $g_1=0$.
 
We want to extract the singular part in the asymptotic expansion around $\kappa = 0$ from representation~\eqref{eq: representation of R}. From the expansion~\eqref{eq: expansion of F} we deduce
\begin{align*}
\lim_{\kappa\to 0}\frac{\kappa^2}{F(-\kappa^2,E_c)}=\frac{1}{1 - \tau^2 g_2}.
\end{align*}
By Proposition~\ref{prop: relation between G_0 and G_2} we have $g_2 = \ip{\Psi_c}{G_2\Psi_c} \leq 0$ such that $1-\tau^2 g_2 \geq 1$.
Therefore the above limit is positive.

In order to identify the projection corresponding to the point spectrum of $H_\tau(E_c)$ at $0$ we need to compute the weak limit 
\begin{equation*}
\lim_{\kappa\to 0}\kappa^2R(-\kappa^2).
\end{equation*}
From \eqref{eq: representation of R} we find that only the second term can give a non-zero contribution. Since $r_0(-\kappa^2)\ket{\varphi}$ in the Fourier space coincides with $\widehat{\varphi}(k)/(k^2+\kappa^2)$ and $\widehat{\varphi}(k)/k^2$ is an $L^2$-function we may write in the $L^2$-sense
\begin{equation*}
\frac{\widehat{\varphi}(k)}{k^2+\kappa^2}-\frac{\widehat{\varphi}(k)}{k^2} =-\frac{\kappa^2}{k^2+\kappa^2}\frac{\widehat{\varphi}(k)}{k^2}.
\end{equation*}
However, the right hand side converges to zero in the $L^2$-norm when $\kappa$ goes to zero (as a consequence of the Lebesgue dominated convergence theorem). In other words, we proved that $G_0\ket{\varphi}$ is actually an $L^2$-function and  $r_0(-\kappa^2)\ket{\varphi}$ converges strongly to $G_0\ket{\varphi}$. 
We conclude that asymptotically
\begin{align}\label{eq: asympt expansion of R}
\kappa^2 R(-\kappa^2)=P_c + o(1)\quad {\rm with}\quad 
P_c = \frac{1}{1-\tau^2g_2}\begin{bmatrix}
\tau^2 G_0\ket{\varphi} \bra{\varphi} G_0 & -\tau G_0\ket{\varphi}\\\
-\tau \bra{\varphi} G_0 & 1
\end{bmatrix}.
\end{align}
It follows from general spectral theory that $P_c$ is an orthogonal projection. 
From this expression we determine the normalized eigenvector
\begin{align}\label{eq: normalised eigenvector Psic}
\Psi_c = \frac{1}{\sqrt{1-\tau^2g_2}}
\begin{bmatrix}
-\tau G_0\ket{\varphi}\\ 1
\end{bmatrix}.
\end{align}
A comparison of this result with \eqref{P(E)} shows that $P_c$ is the norm limit of $P(E)$ when $E\uparrow E_c$ and the proof of \eqref{horia6} is finished. 

\section{Proof of Theorem \ref{thm-horia}{\rm (ii)}}

We want to estimate the difference between the Heisenberg evolved initial projection and the critical instantaneous eigenprojection, i.e.
\begin{equation*}
U_{\eta} ( s_c/\eta,-1/\eta )P(E(-1))U_{\eta} ^*( s_c/\eta,-1/\eta )-P_c.
\end{equation*}
We first give a proof which works for all $\nu\geq 1$ and afterwards we show how the error estimates can be improved if $\nu$ is large enough. 

\subsection{The case $\nu=1$}

The problem we need to circumvent is that the gap between the instantaneous eigenvalue $\lambda(E(s))$ and $0$ vanishes when $s\uparrow s_c$. The main idea (which is not new) is to use the group property of the unitary evolution and write the Heisenberg evolved projection as
\begin{equation*}
U_{\eta} ( s_c/\eta,s/\eta )\left [ U_{\eta} ( s/\eta,-1/\eta )P(E(-1))U_{\eta}^* ( s/\eta,-1/\eta )\right ]U_{\eta}^* ( s_c/\eta,s/\eta )
\end{equation*}
for some $s \in (-1,s_c)$ to be chosen later. Since $s<s_c$ we know that the instantaneous eigenvalue stays away from the rest of the spectrum and we have a minimal gap $g = -\lambda(E(s))\sim E_c-E(s)$. From the usual adiabatic theorem with a gap~(see~\cite{T2}) we know that there is a constant $C$ such that
\begin{align*}
\norm{ U_{\eta} ( s/\eta,-1/\eta )P(E(-1))U_{\eta}^* ( s/\eta,-1/\eta )-P(E(s))} \leq C\frac{\eta}{g^3}
\leq C\frac{\eta}{(E_c-E(s))^3}.
\end{align*}
This estimate is only useful when $s$ is sufficiently far from $s_c$, i.e. $E_c-E(s)\sim s_c-s$ is much larger than $\eta^{1/3}$. If $s$ gets closer to $s_c$ than $\eta^{1/3}$ we need a complementary estimate which is only useful when $E(s)$ is close to $E_c$ and then combine the two. 

The following proposition deals with the `near' region: 

\begin{proposition}\label{prop: evolution s_0 to s_c}
Let $\nu\geq 1$. 
There is a constant $C > 0$ such that for any macroscopic time $s \in [-1,s_c]$
\begin{align*}
\norm{ U_{\eta} ( s_c/\eta,s/\eta ) P(E(s)) U_{\eta}^* ( s/\eta,s_c/\eta ) - P_c} \leq C \abs{E(s)-E_c}^{1/4}.
\end{align*}
\end{proposition}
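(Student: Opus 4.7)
The cleanest route is an interpolation in $u$ which exploits that, by construction, $P(E(u))$ commutes with $H_\tau(E(u))$ on all of $[-1,s_c]$. Introduce
\begin{equation*}
f(u) := U_\eta(s_c/\eta,\, u/\eta)\, P(E(u))\, U_\eta(s_c/\eta,\, u/\eta)^{*}, \qquad u \in [s,s_c],
\end{equation*}
so that $f(s_c) = P(E_c) = P_c$ while $f(s)$ is precisely the operator to be compared with $P_c$. The task reduces to bounding $\norm{f(s)-f(s_c)}$ via the fundamental theorem of calculus.

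The decisive step is the derivative. Using $i\partial_u U_\eta(s_c/\eta,u/\eta) = -\frac{1}{\eta}\, U_\eta(s_c/\eta,u/\eta)\, H_\tau(E(u))$ together with its adjoint, a direct computation yields
\begin{align*}
f'(u) &= \frac{i}{\eta}\, U_\eta(s_c/\eta,u/\eta)\, [H_\tau(E(u)),\, P(E(u))]\, U_\eta(s_c/\eta,u/\eta)^{*} \\
&\quad + E'(u)\, U_\eta(s_c/\eta,u/\eta)\, P'(E(u))\, U_\eta(s_c/\eta,u/\eta)^{*}.
\end{align*}
Because $P(E(u))$ is the spectral projection of $H_\tau(E(u))$ at $\lambda(E(u))$, the commutator vanishes identically, killing the potentially singular $1/\eta$ term. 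What remains is a unitary conjugate of $E'(u) P'(E(u))$, so
\begin{equation*}
\norm{f'(u)} \leq \abs{E'(u)}\cdot \norm{P'(E(u))},
\end{equation*}
with no explicit $\eta$-dependence at all.

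It remains to integrate. Assumption~\ref{cond: switching 1} guarantees that $E(\cdot)$ is strictly increasing on $[-1,s_m]\supset[s,s_c]$, so the change of variables $E = E(u)$ is legitimate. Combined with the singular-derivative bound $\norm{P'(E)} \leq C(E_c-E)^{-3/4}$ from \eqref{horia6}, this gives
\begin{equation*}
\norm{f(s)-P_c} \leq \int_s^{s_c}\! \abs{E'(u)}\,\norm{P'(E(u))}\, du \leq C\!\int_{E(s)}^{E_c}\!(E_c-E)^{-3/4}\, dE = 4C\,(E_c-E(s))^{1/4}.
\end{equation*}
Convergence of the integral (thanks to $3/4<1$) is exactly what produces the exponent $1/4$ in the proposition.

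The only subtlety — and what one might naively fear as the main obstacle — is that $\norm{P'(E)}$ diverges as $E\uparrow E_c$, so $f$ is \emph{a priori} only $C^1$ on the half-open interval $[s,s_c)$. This is harmless: the second half of \eqref{horia6} provides norm-continuity of $P$ (and thus of $f$) at $u=s_c$, so applying the fundamental theorem on $[s,s_c-\varepsilon]$ and letting $\varepsilon\downarrow 0$ justifies the step. There is no spectral-gap issue in this argument — precisely because we compare the propagated eigenprojection with $P_c$ itself, and not with an auxiliary spectral projection, the identity $[H,P]=0$ wipes out the $1/\eta$ piece outright, which is the structural reason the final bound is independent of $\eta$.
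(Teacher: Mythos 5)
Your proof is correct and follows essentially the same route as the paper: both define the unitary conjugate of the instantaneous projection along the evolution, observe that the commutator $[H_\tau(E(\cdot)), P(E(\cdot))]$ vanishes (so no $1/\eta$ term survives), bound the derivative by $\abs{E'}\cdot\norm{P'(E)}$, and integrate using $\norm{P'(E)}\le C(E_c-E)^{-3/4}$ after the change of variables $E=E(u)$. The paper parametrizes by microscopic time $t$ and writes $\widetilde P(t)=U_\eta^*(t,s_c/\eta)P(E(\eta t))U_\eta(t,s_c/\eta)$, whereas you work directly in macroscopic time $u$, but these are the same computation; your explicit treatment of the improper integral at $u=s_c$ via \eqref{horia6} is a harmless refinement that the paper leaves implicit.
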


\begin{proof} We may assume that $s<s_c$. 
For $t \in [ s/\eta, s_c/\eta ]$ we define the projection
\begin{align*}
\widetilde{P} (t) = U_{\eta}^* ( t,s_c/\eta ) P(E(\eta t)) U_{\eta} (t,s_c/\eta).
\end{align*}
Using that $H_\tau(E(\eta t))$ commutes with $P(E(\eta t))$ we compute
\begin{align*}
\widetilde{P}'(t) = \eta E'(\eta t) U_{\eta}^* ( t,s_c/\eta ) P'(E(\eta t)) U_{\eta} (t,s_c/\eta).
\end{align*}
Thus
\begin{align*}
\widetilde{P} ( s/\eta ) = P_c - \int_{s/\eta}^{s_c/\eta} \widetilde{P}'(t) d t\quad {\rm and} \quad \norm{ \widetilde{P}( s/\eta ) - P_c } \leq \int_{E(s)}^{E_c} \norm{P'(E)} d E.
\end{align*}
From \eqref{horia6} we know that $\norm{P'(E)} \leq C ( E_c - E )^{-3/4}$ and, hence, after integration and using that $U_{\eta} ( s_c/\eta,s/\eta )=U_{\eta}^* ( s/\eta,s_c/\eta )$ the proof is complete.
\end{proof}

Combining the estimates for the far and the near region we show that there exists a constant $C$ such that for every $s<s_c$
\begin{equation*}
\norm{U_{\eta} ( s_c/\eta,-1/\eta )P(E(-1))U_{\eta} ^*( s_c/\eta,-1/\eta )-P_c}\leq C((s_c-s)^{1/4}+\eta/(s_c-s)^3).
\end{equation*}
Let us choose $s$ such that the two terms in the parenthesis become equal. Then we obtain $s_c-s=\eta^{4/13}$, hence 
\begin{equation*}
\norm{U_{\eta} ( s_c/\eta,-1/\eta )P(E(-1))U_{\eta} ^*( s_c/\eta,-1/\eta )-P_c}\leq C \eta^{1/13}.
\end{equation*}
In the general case $\nu\geq 1$ the exponent $\alpha(1)$ in \eqref{horia1} can be chosen to be $1/13$. 

\subsection{The case $\nu$ large}

The main idea behind the `standard' adiabatic theorem with a minimal gap $g>0$ is to find an operator $Q(E)$ which satisfies the commutator identity $i[H_\tau(E),Q(E)]=-P'(E)$. If this is achieved then 
the operator 
\begin{equation*}F_\eta(t)=U^*(t,-1/\eta)\bigl(P(E(\eta t))+\eta E'(\eta t)Q(E(\eta t))\bigr)U(t,-1/\eta)
\end{equation*}
obeys $\norm{F_\eta'(t)}\leq C(g) \eta^2$. Hence $F_\eta(t)-F_\eta(-1/\eta)$ is of order $C(g) \eta$ on any macroscopic time interval.

In general the norms of $Q(E)$ and $Q'(E)$ diverge when the gap goes to zero. We will show in the following that both $Q(E)$ and $Q'(E)$ remain bounded up to $E_c$ if $\nu$ is large enough, which will allow us to choose an exponent $\alpha(\nu)=1$. 

If $E<E_c$ we define the reduced resolvent
\begin{equation*}
R^\perp(z)=P^\perp(E)R(z)P^\perp(E).
\end{equation*}
We know that $R^\perp(z)$ has a removable singularity at $\lambda(E)$ and is analytic in a small disk centered at $\lambda(E)$ and 
\begin{align}\label{horia11}
R^\perp(\lambda(E))=\frac{1}{2\pi i}\int_{|z-\lambda(E)|=\abs{\lambda(E)}/2}\frac{1}{z-\lambda(E)}R(z)dz.
\end{align}
Then the operator we are looking for can be defined as
\begin{align*}
Q(E) = i R^\perp(\lambda(E)) P'(E)P(E)- i P(E)P'(E) R^\perp(\lambda(E)).
\end{align*}

We rewrite $Q(E)$ in order to see why both, $Q(E)$ and $Q'(E)$, are bounded up to the threshold provided $\nu$ is large enough. 
We will use \eqref{eq: representation of R} in order to simplify the expression in \eqref{horia11}. First we observe that
\begin{align*}
R(z) &= 
\begin{bmatrix}
r_0(z) & 0 \\
0 & 0
\end{bmatrix}+
\frac{1}{F(z,E)}\ket{\Phi(z)}\bra{\Phi(z)} \quad {\rm with}\quad \ket{\Phi(z)}=\begin{bmatrix}
-\tau r_0(z)\ket{\varphi}\\ 1
\end{bmatrix}.
\end{align*}
From \eqref{def F} we also have that
\begin{equation*}
m(z,E)=\frac{z-\lambda(E)}{F(z,E)}=-\big(\int_0^1 (1+\tau^2\ip{\varphi}{r_0^2((1-t)\lambda(E)+tz)\varphi}dt\big)^{-1}
\end{equation*}
is analytic for $\re(z)<0$. Then Cauchy's residue theorem applied to \eqref{horia11} yields
\begin{align}\label{horia12}
R^\perp(\lambda(E))=\begin{bmatrix}
r_0(\lambda(E)) & 0 \\
0 & 0
\end{bmatrix}+\left (m(z,E)\ket{\Phi(z)}\bra{\Phi(z)}\right )'|_{z=\lambda(E)}.
\end{align}
The first term in \eqref{horia12} is never bounded when $\lambda(E)$ tends to zero. The crucial observation is that the product $R^\perp(\lambda(E))P'(E)$ has finite rank and is  better behaved, see also \eqref{P(E)} for the expression of $P(E)$. The worst singularity of $R^\perp(\lambda(E))P'(E)$ appears in the vector $r_0^3(\lambda(E))\ket{\varphi}$ and the same holds for $Q(E)$. When we analyze $Q'(E)$ its worst singularity appears in the vector $r_0^4(\lambda(E))\ket{\varphi}$. 

Now the question is to find the minimal $\nu$ which insures that 
\begin{equation*}
\norm{r_0^4(\lambda(E))\ket{\varphi}}=\sqrt{\ip{\varphi}{r_0^8(\lambda(E))\varphi}}
\end{equation*}
remains bounded when $\lambda(E)$ approaches zero. We find
\begin{equation*}
\ip{\varphi}{r_0^8(\lambda(E))\varphi}=\int_{\mathbb{R}^3}\frac{\abs{\widehat{\varphi}(k)}^2}{(k^2+\abs{\lambda(E)})^8}dk\leq \int_{\mathbb{R}^3}\frac{\abs{\widehat{\varphi}(k)}^2}{\abs{k}^{16}}dk.
\end{equation*}
Therefore a sufficient condition for the right hand side to be bounded is $\nu\geq 7$. 

\section{Proof of Theorem \ref{thm-horia}{\rm (iii)}}\label{subsec: evolution oc}

From now on we will deal with the Hamiltonian $H_{\tau}(E)$ when $E_c\leq E$. Consider the standard set of generalized eigenfunctions of $-\Delta$ given by $(2\pi)^{-3/2}e^{i \ip{k}{x}}$ and the corresponding set of generalized eigenfunctions of the decoupled Hamiltonian  $H_0(E)$ are given by
\begin{equation*}
\ket{\Psi_k^{(0)}}=\begin{bmatrix}(2\pi)^{-3/2}e^{i \ip{k}{x}} \\ 0\end{bmatrix}.
\end{equation*}
Using the Lippmann--Schwinger equation we can define the corresponding generalized eigenfunctions of $H_\tau(E)$, denoted by $\ket{\Psi_k}$ and given by 
\begin{equation}\label{horia12'}
\ket{\Psi_k}=\ket{\Psi_k^{(0)}}-\tau (H_\tau(E)-k^2-i0_+)^{-1}\ket{\varsigma}\ip{\varphi}{\Psi_k^{(0)}}.
\end{equation}
This implies that (see also \eqref{eq: representation of R})
\begin{equation}\label{horia13}
\ip{\varsigma}{\Psi_k}=-\tau \ip{\varsigma}{(H_\tau(E)-k^2-i0_+)^{-1}|\varsigma} \overline{\widehat{\varphi}(k)}=-\tau \overline{\widehat{\varphi}(k)}/{F(k^2+i0_+,E)}.
\end{equation}
From \eqref{def F}, replacing $z$ by $r^2+i0_+$ with $r>0$, we obtain
\begin{equation}\label{horia14}
F(r^2+i0_+,E)=E-E_c-r^2-\tau^2\lim_{\epsilon\downarrow 0}\int_{\mathbb{R}^3}\abs{\widehat{\varphi}(k)}^2
\Bigl(\frac{1}{k^2-r^2-i\epsilon}-\frac{1}{k^2}\Bigr)dk.
\end{equation}
Let $\kappa_\epsilon$ be the principal branch of $(r^2+i\epsilon)^{1/2}$. Then we have
\begin{equation}\label{horia15}
\int_{\mathbb{R}^3}\abs{\widehat{\varphi}(k)}^2
\Bigl( \frac{1}{k^2-r^2-i\epsilon}-\frac{1}{k^2}\Bigr)dk
=\int_{\mathbb{R}^6}\overline{\varphi(x)}\varphi(y)\frac{e^{i\kappa_\epsilon \abs{x-y}}-1}{4\pi\abs{x-y}}dxdy.
\end{equation}

In this section we need an extra assumption on $\varphi$ besides Assumption~\ref{cond: coupling 1}:

\begin{assumption}\label{cond: coupling 2}
Consider the expression $F(r^2+i0_+,E)$ in~\eqref{horia14} with $E>E_c$. We assume that there exists $E_a\in (E_c,E_m]$ such that $\inf_{r \in \mathbb{R}} \abs{ F(r^2+i0_+,E_a) } \geq c$ for some constant $c > 0$.
\end{assumption}
This assumption together with \eqref{eq: representation of R} imply
 that $H_\tau(E_a)$ has purely absolutely continuous spectrum equal to $[0,\infty)$. Let us comment on the class of functions $\varphi$ that satisfy this condition. Introducing \eqref{horia15} in \eqref{horia14} and taking the limit $\epsilon\downarrow 0$ we obtain
\begin{align}\label{horia20}
F(r^2+i0_+,E)=&E-E_c-r^2-\tau^2\int_{\mathbb{R}^6}\overline{\varphi(x)}\varphi(y)\frac{\cos(r\abs{x-y})-1}{4\pi\abs{x-y}}dxdy\nonumber \\
&-i\tau^2
\int_{\mathbb{R}^6}\overline{\varphi(x)}\varphi(y)\frac{\sin(r\abs{x-y})}{4\pi\abs{x-y}}dxdy.
\end{align}
Let $E>E_c$. If $r=0$ the real part of the above expression equals $E-E_c>0$. The real part is negative if $r$ is larger than some critical value. If $\tau$ is small enough then there is exactly one value $r_E\sim \sqrt{E-E_c}>0$ for which the real part equals zero. For Assumption~\ref{cond: coupling 2} to hold we need to make sure that the imaginary part of $F(r^2+i0_+,E)$ is not zero on a neighborhood of $r_E$. Applying the Sokhotski--Plemelj formula in \eqref{horia14}
we get
 \begin{align}\label{horia21}
\im F(r^2+i0_+,E)&=-\frac{\tau^2 r}{2}\int_{\mathbb{S}^2}\abs{\widehat{\varphi}(r,\omega)}^2d\omega\nonumber \\
&=-\tau^2
\int_{\mathbb{R}^6}\overline{\varphi(x)}\varphi(y)\frac{\sin(r\abs{x-y})}{4\pi\abs{x-y}}dxdy.
\end{align}
It follows that it suffices to have $\widehat{\varphi}(k)$ different from zero on a ball around $k=0$ not including the origin. For example, if $\nabla \widehat{\varphi}(0)\neq 0$ and if $\tau$ is small enough then  $r_E$ is also small and 
\begin{equation*}
\int_{\mathbb{S}^2}\abs{\widehat{\varphi}(r_E,\omega)}^2d\omega=Cr_E^2 +\mathcal{O}(r_E^4)\neq 0.
\end{equation*}
A similar argument works if some coefficient of the $n$-th order Taylor expansion at $k=0$ is different from zero while all coefficients up to order  $n-1$ are zero. We then have
\begin{equation*}
\int_{\mathbb{S}^2}\abs{\widehat{\varphi}(r_E,\omega)}^2d\omega=Cr_E^{2n} +\mathcal{O}(r_E^{2n+2})\neq 0. 
\end{equation*}
Another convenient scenario is when $\widehat{\varphi}$ is spherically symmetric and different from zero outside $k=0$. In this case, all its Taylor coefficients at $k=0$ may be zero. 

In the rest of this section we will prove that the survival probability of the critical eigenvector after being Heisenberg evolved between $s_c/\eta$ and $s'_c/\eta$ goes to zero with $\eta$ if the `dispersive' Assumption~\ref{cond: coupling 2} holds true. 
Due to the fact that the adiabatic theorem holds up to the threshold (see \eqref{horia1}), it is enough to show
\begin{equation}\label{horia22}
\lim_{\eta\downarrow 0}\abs{\ip{\Psi_c}{U_{\eta} (s_c'/\eta, s_c/\eta ) \Psi_c}}^2=0.
\end{equation}

To this end we will compare the true time-evolution generated by $H_\tau(E(\eta t))$ with the one generated by the time-independent, purely absolutely continuous Hamiltonian $H_\tau(E_a)$ where $E_a$ is overcritical and obeys the dispersive condition. We know that there exists at least one $s_a\in (s_c,s'_c)$ such that $E_a=E(s_a)$. In order to simplify the notation we set $H_a = H_\tau(E(s_a))$ and $t_a = s_a/{\eta}$.

The following propagation estimate will play a key role:
\begin{proposition}\label{prop: propagation estimate}
There exists a constant $C > 0$ such that for all $t\in\mathbb{R}$,
\begin{align}\label{eq: improved propagation estimate}
\abs{ \ip{\varsigma}{e^{ - i t H_a} \varsigma}} \leq C ( 1 + \abs{t} )^{-5/2}.
\end{align}
\end{proposition}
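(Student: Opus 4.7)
The plan is to combine Stone's formula with a careful threshold analysis of the spectral density and a standard Fresnel-integral estimate. Under Assumption~\ref{cond: coupling 2}, the resolvent identity \eqref{eq: representation of R} gives $\langle\varsigma,R(z)\varsigma\rangle = 1/F(z,E_a)$ and shows $H_a$ has purely absolutely continuous spectrum on $[0,\infty)$, so Stone's formula delivers
\begin{equation*}
\langle\varsigma, e^{-itH_a}\varsigma\rangle = \int_0^\infty e^{-it\lambda}\, h(\lambda)\, d\lambda, \qquad h(\lambda) := \tfrac{1}{\pi}\im \tfrac{1}{F(\lambda + i 0_+, E_a)}.
\end{equation*}

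The first key step is to factor $h(\lambda) = \sqrt{\lambda}\, H(\lambda)$ with $H \in C^\infty([0,\infty))$ vanishing to order $\nu$ at the origin. From \eqref{horia20}--\eqref{horia21}, $\re F(\lambda+i0_+,E_a)$ is smooth in $\lambda$, while $\im F(\lambda+i0_+,E_a) = -\tfrac{\tau^2 \sqrt\lambda}{2}\,g(\lambda)$ with $g(\lambda) := \int_{\mathbb{S}^2}|\widehat\varphi(\sqrt\lambda\,\omega)|^2 d\omega$. The function $g$ is a smooth even function of $\sqrt\lambda$, hence a $C^\infty$ function of $\lambda\in[0,\infty)$; Assumption~\ref{cond: coupling 1} forces $g(\lambda) = \mathcal{O}(\lambda^\nu)$ near zero. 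Crucially $|F|^2 = (\re F)^2 + \lambda\,(\tau^2 g/2)^2$ is also smooth in $\lambda$, and is bounded away from $0$ by Assumption~\ref{cond: coupling 2}. Taking $H(\lambda) := \tau^2 g(\lambda)/(2\pi|F(\lambda+i0_+,E_a)|^2)$ produces the factorisation, together with polynomial decay of $H$ and its derivatives at infinity (since $|F|^2 \sim \lambda^2$ and $g$ together with all its derivatives stay bounded).

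I then substitute $u = \sqrt{\lambda}$ to arrive at a Fresnel-type integral with smooth amplitude:
\begin{equation*}
\langle\varsigma, e^{-itH_a}\varsigma\rangle = \int_{-\infty}^\infty e^{-itu^2}\, K(u)\, du, \qquad K(u):= u^2 H(u^2),
\end{equation*}
where $K$ is even, $C^\infty$, vanishes like $u^{2\nu+2}$ (hence at least $u^4$) at $u=0$, and decays polynomially at infinity together with all its derivatives. Using $\partial_u e^{-itu^2} = -2itu\, e^{-itu^2}$, I integrate by parts twice. The boundary terms vanish because $K(u)/u = \mathcal{O}(u^3)$ near zero and the further transform $(K(u)/u)'/u$ extends continuously across $u=0$. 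This yields
\begin{equation*}
\int_{-\infty}^\infty e^{-itu^2} K(u)\, du = \frac{1}{(2it)^2}\int_{-\infty}^\infty e^{-itu^2}\, K_2(u)\, du
\end{equation*}
with $K_2 \in C^\infty\cap L^1$ and of bounded variation. Since $\partial_u^2(u^2) = 2 > 0$ uniformly, van der Corput's lemma bounds the remaining integral by $C|t|^{-1/2}$. Combining, $|\langle\varsigma, e^{-itH_a}\varsigma\rangle| \leq C|t|^{-5/2}$ for $|t|\geq 1$; unitarity gives $|\langle\varsigma, e^{-itH_a}\varsigma\rangle| \leq 1$ for $|t|\leq 1$, and together they yield \eqref{eq: improved propagation estimate}.

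The main obstacle, I expect, is the threshold factorisation of Step~2: one must verify carefully that the $\sqrt{\lambda}$-singularity in $\im F$ combines just right with the order-$\nu$ zero of $\widehat\varphi$ at $k=0$ so that $h(\lambda)/\sqrt\lambda$ is genuinely a $C^\infty$ function on $[0,\infty)$, with derivatives controlled uniformly and with polynomial decay at infinity. Once this is in hand the oscillatory-integral analysis is routine, but the exponent $5/2$ is rigid: it is precisely the boundary behaviour $h(\lambda)\sim \lambda^{\nu+1/2}$ in the worst case $\nu=1$ that selects it, so the fractional-power bookkeeping at threshold admits no slack.
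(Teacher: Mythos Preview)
Your proof is correct and takes a genuinely different route from the paper's. The paper localises the spectral density $\rho(\lambda)=\tfrac{1}{\pi}\im\langle\varsigma,R(\lambda+i0)\varsigma\rangle$ with a cutoff into a threshold piece $\rho_<$ and a high-energy piece $\rho_>$. For $\rho_>$ it integrates by parts three times in $\lambda$, invoking the Jensen--Kato high-energy bound $\partial_\lambda^n\langle\varphi,r_0(\lambda+i0)\varphi\rangle=\mathcal{O}(\lambda^{-(n+1)/2})$ to ensure $\rho_>^{(3)}\in L^1$, which yields $\mathcal{O}(t^{-3})$. For $\rho_<$ it reads off the threshold expansion $\rho_<(\lambda)=c\lambda^{3/2}+\mathcal{O}(\lambda^2)$ and cites \cite[Lemma~10.2]{JK} to obtain $\mathcal{O}(t^{-5/2})$. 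Your approach instead performs a single global substitution $u=\sqrt{\lambda}$ to convert the integral into a Fresnel-type oscillatory integral $\int_{\mathbb{R}} e^{-itu^2}K(u)\,du$ with a smooth even amplitude vanishing to order $2\nu+2\geq 4$ at the origin; two integrations by parts against the quadratic phase (each costing a harmless division by $u$) extract $t^{-2}$, and van der Corput's lemma supplies the remaining $t^{-1/2}$. The paper's route makes the dependence on the weight exponent ($w_0>9/2$) explicit and slots directly into the Jensen--Kato framework; your route is more self-contained, treats threshold and high energies uniformly, and makes the half-integer decay emerge transparently from the quadratic phase rather than from a cited threshold lemma. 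One point worth tightening in a full write-up: your claim that $K_2\in L^1$ with bounded variation relies on tracking that each operation $f\mapsto (f/u)'$ gains one power of decay at infinity when $f$ and its derivatives are $\mathcal{O}(u^{-2})$; this is true but deserves a line of justification, since boundedness of $\tilde g$ and its $u$-derivatives alone gives only $\partial_u^k K=\mathcal{O}(u^{-2})$ for every $k$.
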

\begin{proof}
We assume $\varphi \in L^{2,w} (\mathbb{R}^3)$ for a $w \geq w_0 \geq 1$.
The $w_0$ will be specified below.
Using the resolvent representation~\eqref{eq: representation of R} for $E=E_a>E_c$ we have 
\begin{equation*}
\ip{\varsigma}{R(z) \varsigma} = \frac1{F(z,E_a)}.
\end{equation*}
The boundary values $F(\lambda \pm i0,E_a)$ exist for all $\lambda \geq 0$ and
$
F(\lambda \pm i0,E_a) = E_a - \lambda - \tau^2 \ip{\varphi}{r_0(\lambda \pm i0) \varphi}
$. 
We have for $n \geq 0$, $w_0 > n + \frac12$, and $z \in \mathbb{C} \setminus [0,\infty)$ that
\begin{equation*}
\frac{d^n}{dz^n}\ip{\varphi}{ r_0(z) \varphi} = \mathcal{O} \big(\abs{z}^{-(n+1)/2}\big) \quad \text{as} \ \abs{z} \to \infty,
\end{equation*}
see e.g. \cite[Theorem 8.1]{JK}.
Using this result we see that there are $C_1, C_2 >0$ such that
\begin{equation*}
\abs{F(z,E_a)} \geq C_1 \abs{z} \quad \text{for} \ \abs{z} \geq C_2\ \text{with}\ \im(z) \neq 0
\end{equation*}
and the estimate extends to the boundary values. We have
\begin{equation*}
\im \frac1{F(\lambda+i0,E_a)} = - \tau^2 \frac{\im \ip{\varphi}{r_0(\lambda+i0) \varphi}}{\abs{F(\lambda+i0,E_a)}^2}.
\end{equation*}
The choice of $E_a$ implies that the spectral density
$
\rho(\lambda) = \frac1{\pi} \im \ip{\varsigma}{R(\lambda+i0) \varsigma}
$
is well defined for $\lambda \geq 0$.
Recall that Assumption~\ref{cond: coupling 2} is supposed to hold.
The results above show that there is a constant $C > 0$ such that
\begin{equation*}
\abs{\rho(\lambda)} \leq C \min \{ 1, \lambda^{-5/2} \}, \quad \lambda > 0.
\end{equation*}
This implies that we have the representation
\begin{equation*}
\ip{\varsigma}{ e^{- it H_a} \varsigma} = \int_0^{\infty} e^{it\lambda} \rho(\lambda) d\lambda
\end{equation*}
where the integral is well defined.
If we assume $w_0 > n + \frac12$ for some integer $n \geq 1$ then we have that $\rho$ has $n$ continuous derivatives.
Furthermore, for some $\lambda_0$ sufficiently large, 
\begin{equation}\label{eq: arne1}
\abs{\rho^{(n)}(\lambda)} \leq C \lambda^{-(n+5)/2},\quad \lambda \geq \lambda_0.
\end{equation}
Let $\delta > 0$.
We now introduce a function $\chi \in C_0^{\infty}(\mathbb{R})$ such that $\chi(\lambda) = 1$ for $\lambda \in [0,\delta]$ and $\chi(\lambda) = 0$ for $\lambda \geq 2\delta$.
Define $\rho_<(\lambda) = \chi(\lambda) \rho(\lambda)$ and $\rho_>(\lambda) = (1-\chi(\lambda))\rho(\lambda)$.

It follows from~\eqref{eq: arne1} that $\rho_>^{(3)} \in L^1(\mathbb{R})$ such that
\begin{equation*}
\int_0^{\infty} e^{-it\lambda} \rho_>(\lambda) d\lambda = \mathcal{O}(t^{-3})\quad \text{as}\ t \to \infty,
\end{equation*}
see e.g. \cite[Lemma 10.1]{JK}.

Using Lemma~\ref{lem: expansion of r_0} and $\ip{\varphi}{1} = 0$ we have
$\rho_< (\lambda) = c\lambda^{3/2} + \mathcal{O}(\lambda^2)$ as $\lambda \downarrow 0$.
The asymptotic expansion in Lemma~\ref{lem: expansion of r_0} can be differentiated and we have
\begin{equation*}
\rho_<'(\lambda) = c \lambda^{1/2} + \mathcal{O}(\lambda)\quad \text{as}\ \lambda \downarrow 0.
\end{equation*}
We can then use \cite[Lemma 10.2]{JK} to conclude that
\begin{equation*}
\int_0^{\infty} e^{-it\lambda} \rho_<(\lambda) d \lambda = \mathcal{O}(t^{-5/2})\quad \text{as}\ t \to \infty.
\end{equation*}
Thus we have proved the propagation estimate
\begin{equation*}
\abs{\ip{\varsigma}{e^{-itH_a} \varsigma}} \leq C(1+t)^{-5/2},\quad t \geq 0.
\end{equation*}
In order to obtain this result we need to have $w_0 > 9/2$.
\end{proof}

\subsection{A fundamental identity} 

We compare the time-evolution of the time-dependent Hamiltonian $H(\eta t)=H_\tau(E(\eta t))$ with the one generated by the Hamiltonian $H_a$.
Recall that $t_a = s_a/\eta$ and set $\varepsilon_a(t)=E_a-E(\eta t)$. 
The Dyson equation yields
\begin{align*}
U_{\eta} (t_a,t_c) = e^{-i(t_a-t_c)H_a} + i \int_{t_c}^{t_a} \varepsilon_a(u) e^{-i(t_a-u)H_a} \ket{\varsigma} \bra{\varsigma} U_{\eta}(u,t_c) du
\end{align*}
for the true time-evolution from $t_c = s_c/\eta$ to $t_a$.
Moreover, replacing $t_c$ by $t_c' = s_c'/\eta$ in this formula and taking the adjoint we obtain 
\begin{align*}
U_{\eta} (t_c',t_a) = U_{\eta}^* (t_a,t_c') = e^{-i(t_c'-t_a)H_a} + i \int_{t_a}^{t_c'} \varepsilon_a(v) U_{\eta} (t_c',v) \ket{\varsigma} \bra{\varsigma} e^{-i(v-t_a)H_a} dv.
\end{align*}
Using these two formulae as well as the group property we obtain an identity which will play a fundamental role in what follows:
\begin{align}\label{horia23}
U_{\eta} (t_c',t_c) &= U_{\eta} (t_c',t_a) U_{\eta} (t_a,t_c)\notag \\
&= e^{ - i ( t_c' - t_c ) H_a}
+ i \int_{t_c}^{t_a} \varepsilon_a(u) e^{-i (t_c'-u)H_a} \ket{\varsigma} \bra{\varsigma} U_{\eta} (u,t_c) d u\notag\\
&\phantom{=.} + i \int_{t_a}^{t_c'} \varepsilon_a(v) U_{\eta} (t_c',v) \ket{\varsigma} \bra{\varsigma} e^{ -i(v-t_c)H_a} d v\notag\\
&\phantom{=.} - \int_{t_a}^{t_c'} \int_{t_c}^{t_a} \varepsilon_a(u) \varepsilon_a(v) U_{\eta}(t_c',v) \ket{\varsigma} \ip{\varsigma}{e^{-i(v-u)H_a} \varsigma} \bra{\varsigma} U_{\eta}(u,t_c) du dv.
\end{align}
\subsection{Proof of the limit~\eqref{horia22} }

We use an $\varepsilon/2$-argument to prove \eqref{horia22}. Let $0<\varepsilon < 1$.
We can always construct an approximating vector $\Psi_{\varepsilon}$ such that $\norm{\Psi_c - \Psi_{\varepsilon}} < \frac{\varepsilon}{10}$ and $\Psi_{\varepsilon}$ is smooth in the spectral representation of $H_a$ and compactly supported away from the only threshold of $H_a$, $z=0$. 
We have that $\vert \ip{\Psi_c}{U_{\eta} (t_c',t_c) \Psi_c} - \ip{\Psi_{\varepsilon}}{U_{\eta} (t_c',t_c) \Psi_{\varepsilon}} \vert < \frac{\varepsilon}{2}$ uniformly in $\eta$. 
Thus it suffices to show that the overlap $\abs{\ip{\Psi_{\varepsilon}}{U_{\eta} (t_c',t_c) \Psi_{\varepsilon}}}$ goes to zero with $\eta$.

From \eqref{horia23} we obtain
\begin{align*}
\abs{\ip{\Psi_{\varepsilon}}{U_{\eta} (t_c',t_c) \Psi_{\varepsilon}}} &\leq \abs{\ip{\Psi_{\varepsilon}}{e^{ - i ( t_c' - t_c ) H_a} \Psi_{\varepsilon}}}\notag\\
&\phantom{=} + \int_{t_c}^{t_a} \varepsilon_a (v) \abs{\ip{\Psi_{\varepsilon}}{e^{ - i ( t_c' - v ) H_a} \varsigma}} d v\notag\\
&\phantom{=} + \int_{t_a}^{t_c'} \varepsilon_a (u) \abs{\ip{\varsigma}{e^{ - i ( u - t_c ) H_a} \Psi_{\varepsilon}}} d u \notag\\
&\phantom{=} + \int_{t_a}^{t_c'} \int_{t_c}^{t_a} \varepsilon_a (u) \varepsilon_a (v)  
\abs{\ip{\varsigma}{e^{ - i ( u-v ) H_a} \varsigma}} d v d u.
\end{align*}
We will show that each term on the right hand side, denoted in the order of appearance by $I_1$, $I_2$, $I_3$, and $I_4$, vanishes for $\eta \downarrow 0$.

Recall that the microscopic times depend on $\eta$ since $t_c = s_c/\eta$, $t_c' = s_c/\eta$, and $t_a = \frac{s_a}{\eta}$ where $s_c$, $s_c'$, and $s_a$ are fixed.
Thus the lengths of the intervals $[ t_c, t_c' ]$, $[ t_a,t_c' ]$, and $[ t_c, t_a ]$  are of order ${\eta}^{-1}$.

The scalar product appearing in the first term $I_1$ can be rewritten as the Fourier transform of a $C_0^\infty ((0,\infty))$-function. Hence it decays faster than any power of $\eta$ by the usual `integration by parts' argument.

In order to treat the second term it is crucial to observe that  $v \in [t_c, t_a ]$ such that $t_c'-v$ is always of order $\eta^{-1}$. A similar `integration by parts' argument as for $I_1$ yields that $I_2$ also decays faster than any power of $\eta$. 

The third term is similar to the second one if we note that $u\in [t_a,t_c']$. Hence $u-t_c$ is always of order $\eta^{-1}$.

We continue with the last term, $I_4$, which needs further analysis. 
Since $E(\cdot) \in C^1$ we have $\abs{E(s_a) - E(s)} \leq C \abs{s - s_a}$.
This estimate and Proposition~\ref{prop: propagation estimate} imply that there exists a constant $C_1 > 0$ such that
\begin{align*}
\abs{I_4} &\leq C_1 \eta^{2} \int_{t_a}^{t_c'} \int_{t_c}^{t_a} \abs{ u - t_a} \abs{v - t_a} ( 1 + \abs{u-v} )^{-5/2} d v d u\\
&= C_1 \eta^{2} \int_0^{t_c' - t_a} \int_{t_c - t_a}^0 \abs{u} \abs{v} ( 1 + \abs{u - v} )^{-5/2} d v d u.
\end{align*}
Since $u \geq 0 \geq v$, $1+u \leq 1+u-v$, and $1-v \leq 1+u-v$ we obtain
\begin{equation*}
( 1 + u - v )^{- 5/2}\leq ( 1+u )^{-5/4}( 1-v )^{-5/4}
\end{equation*}
and then
\begin{align*}
\abs{I_4}
&\leq C_1 \eta^{2} \int_0^{{(s_c' -s_a)}/{\eta}} \int_{{(s_c - s_a)}/{\eta}}^0 u (-v) ( 1 + u - v )^{-5/2} d v d u\\
&\leq C_1 \eta^{2}  \int_0^{{(s_c' -s_a)}/{\eta}} u ( 1+u )^{-5/4} d u  \int_0^{{(s_a - s_c)}/{\eta}} v ( 1+v )^{-5/4} d v \\
&\leq C_1 \eta^{2} \int_0^{{(s_c' -s_a)}/{\eta}} ( 1+u )^{-1/4} d u  \int_0^{{(s_a - s_c)}/{\eta}} ( 1+v )^{-1/4} d v .
\end{align*}
For $0 < \eta \leq \min \{s_c' -s_a,s_a - s_c\}$ we obtain
\begin{align*}
\int_0^{{(s_a - s_c)}/{\eta}} ( 1+v )^{-1/4} d v \leq \tfrac43( 1+{(s_a-s_c)}/{\eta})^{3/4} \leq C\eta^{-3/4},
\end{align*}
and an analogous estimate also holds for for the $u$ integral.
Finally, we conclude that there is a constant $C_2 >0$ such that 
$\abs{I_4} \leq C_2 \eta^{\frac12}$. 
Hence all four terms go to zero as $\eta \downarrow 0$ which concludes the proof of \eqref{horia22}.

\section{Proof of Theorem \ref{thm-horia}{\rm (iv)}}\label{abc}

We assume there exists $\delta>0$ such that $\widehat{\varphi}(k)=0$ for $\abs{k}\leq \delta$. Let $\chi_\delta(k)$ denote the characteristic function of the closed ball $B_\delta(0)$ and define an orthogonal projection $\pi_\delta$ in $L^2(\mathbb{R}^3)$ by
\begin{equation*}
\pi_\delta f=\mathcal{F}^{-1}(\chi_\delta \widehat{f}).
\end{equation*}
We extend this projection to $\mathcal{H}$ by setting $\Pi_\delta=\pi_\delta\oplus 0$. Then $\Pi_\delta$ commutes with the full Hamiltonian $H_\tau(E)$, see \eqref{eq: Hamiltonian}. The Hamiltonian is block-diagonal with respect to the decomposition induced by $\Pi_\delta$. The block $\Pi_\delta H_\tau(E) \Pi_\delta$ can be identified with the bounded operator $-\Delta_<$ given by
\begin{equation*}
-\Delta_<f=\mathcal{F}^{-1}(k^2\widehat{f})\quad \text{for} \quad \widehat{f}\in L^2(B_\delta(0)),
\end{equation*}
while the other block
\begin{align}\label{horia30}
h_\tau(E) = \begin{bmatrix} -\Delta_> & \tau \ket{\varphi}\\ \tau \bra{\varphi} & E \end{bmatrix}\quad {\rm with} \quad -\Delta_>f= 
\mathcal{F}^{-1}(k^2\widehat{f}),\quad \widehat{f}\in L^2(\mathbb{R}^3\setminus B_\delta(0)),
\end{align}
acts in $\bigl({\pi_\delta}^\perp L^2(\mathbb{R}^3)\bigr)\oplus \mathbb{C}$.

The spectrum of $H_\tau(E)$ is the union of the spectrum of $-\Delta_<$, i.e. the interval $[0,\delta^2]$, and the spectrum of $h_\tau(E)$. Since the adiabatic time-evolution also commutes with $\Pi_\delta$ it is factorized. 

If $E<\delta^2$ then the operator $h_0(E)$ has purely absolutely continuous spectrum in $[\delta^2,\infty)$ and a discrete bound state with energy $E$. For $\tau\neq 0$ the essential spectrum of $h_\tau(E)$ is still $[\delta^2,\infty)$. The crucial observation is that if we replace $r_0(z)$ with $(-\Delta_>-z)^{-1}$ in formula~\eqref{eq: representation of R} then we obtain  $(h_\tau(E)-z)^{-1}$. Moreover, $h_\tau(E)$ has discrete bound states if for given $E < \delta^2$ the `reduced' Feshbach map $F_{\delta}(\cdot,E)\colon(-\infty,\delta^2) \to \mathbb{R}$,
\begin{equation}\label{horia31}
F_\delta(x,E) = E - x - \tau^2 \int_{\abs{k}\geq \delta} \frac{\abs{\widehat{\varphi}(k)}^2}{k^2 - x}  d k,
\end{equation}
vanishes for some $x<\delta^2$.
The function $F_\delta(\cdot,E)$ is strictly decreasing from $+\infty$ and reaches negative values when $x\geq E$. Thus there is a unique discrete bound state $\lambda(E)$ such that 
\begin{equation*}
F_\delta(\lambda(E),E)=0\quad {\rm and}\quad \lambda(E)< E <\delta^2.
\end{equation*}
The conclusion is that we effectively work with a (reduced) system which always has a minimal gap. Hence the usual adiabatic theorem can be applied without problems in the subspace $\bigl({\pi_\delta}^\perp L^2(\mathbb{R}^3)\bigr)\oplus \mathbb{C}$. Since the instantaneous eigenvector always remains in this subspace and the full Heisenberg evolution is factorized, the proof of \eqref{horia4} follows.  

\section{Proof of Theorem \ref{thm-horia}{\rm (v)}}\label{subsec: oc-stable}

We go back to Assumption \ref{cond: coupling 1}. Then $\widehat{\varphi}(k)/\abs{k}^\nu$ is bounded (and non-zero) in a neighborhood of $k=0$. 

Let $\delta>0$ and let $\chi_\delta$ denote the characteristic function of $B_\delta(0)$. We define the cut-off function $\varphi_\delta$ by
\begin{align}\label{horia32}
\widehat{\varphi}_\delta(k)=(1-\chi_\delta(k))\widehat{\varphi}(k).
\end{align} 
By construction, $\widehat{\varphi}_\delta$ vanishes if $\abs{k}\leq \delta$. We define the Hamiltonian with cut-off
\begin{equation*}
H_{\tau}^\delta(E) = \begin{bmatrix} -\Delta & 0\\ 0 & E \end{bmatrix}
+ \tau\begin{bmatrix} 0 & \ket{\varphi_\delta}\\ \bra{\varphi_\delta} & 0 \end{bmatrix}.
\end{equation*}
We know from the previous section that $H_{\tau}^\delta(E)$ has a simple eigenvalue $\lambda_\delta(E)$ if $E<\delta^2$, but in the following we are interested in showing that this instantaneous eigenvalue survives for larger values of $E$ if $\delta$ is small. 

Set $s_1 = s_c-\alpha^{-1} \delta^{5/2}$ for some $\alpha >0$. Since $E(s_1)<E_c$ we know that the operator without a cut-off, i.e. $H_{\tau}(E(s_1))$, has a discrete and isolated negative eigenvalue  $\lambda(E(s_1))$. We have already seen that $\lambda'(E)$ is positive and bounded near $E<E_c$. Furthermore,
 \begin{align}\label{horia33}
\lambda(E(s_1))\sim s_1-s_c=-\alpha^{-1} \delta^{5/2}
\end{align} 
for small $\delta>0$.

\begin{lemma}\label{lemma-hc-2}
For sufficiently small $\alpha$ the operator $H_{\tau}^\delta(E(s_1))$ has a unique negative eigenvalue $\lambda_\delta(E(s_1))$ corresponding to an eigenprojection $P_\delta(E(s_1))$ and
\begin{equation*}
P_\delta(E(s_1))-P(E(s_1))=\mathcal{O}(\alpha)
\end{equation*}
uniformly in $\delta<1$. 
\end{lemma}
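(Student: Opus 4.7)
The plan is to view $H_\tau^\delta(E(s_1))$ as a norm-small perturbation of $H_\tau(E(s_1))$ and to extract both the existence of the simple negative eigenvalue $\lambda_\delta(E(s_1))$ and the projection estimate from a single Riesz-projection argument on a contour enclosing $\lambda(E(s_1))$.

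First I would estimate the perturbation. Setting
\begin{equation*}
V_\delta=\begin{bmatrix}0 & \ket{\varphi-\varphi_\delta}\\ \bra{\varphi-\varphi_\delta} & 0\end{bmatrix}
\end{equation*}
we have $H_\tau^\delta(E(s_1))=H_\tau(E(s_1))-\tau V_\delta$ and $\norm{V_\delta}\leq\norm{\varphi-\varphi_\delta}$. Assumption~\ref{cond: coupling 1} gives $|\widehat{\varphi}(k)|\leq C|k|^\nu$ near $k=0$, so
\begin{equation*}
\norm{\varphi-\varphi_\delta}^2=\int_{\abs{k}\leq\delta}\abs{\widehat{\varphi}(k)}^2\,dk\leq C\int_0^\delta r^{2\nu+2}\,dr\leq C\delta^{2\nu+3},
\end{equation*}
yielding $\norm{V_\delta}\leq C\delta^{\nu+3/2}\leq C\delta^{5/2}$ uniformly for $\delta<1$. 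Combined with \eqref{horia33}, which gives $\abs{\lambda(E(s_1))}\sim\alpha^{-1}\delta^{5/2}$, this produces the crucial ratio $\norm{V_\delta}/\abs{\lambda(E(s_1))}\leq C\alpha$, uniformly in $\delta<1$.

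Next I would set up the contour $\Gamma=\{z:\abs{z-\lambda(E(s_1))}=\abs{\lambda(E(s_1))}/2\}$, which lies in $(-\infty,0)$ and in the resolvent set of $H_\tau(E(s_1))$, with $\norm{(z-H_\tau(E(s_1)))^{-1}}\leq 2/\abs{\lambda(E(s_1))}$ on $\Gamma$ (the nearest other spectral point being $0$, at distance $\abs{\lambda(E(s_1))}$). By the previous bound, the Neumann series for $(z-H_\tau^\delta(E(s_1)))^{-1}$ converges uniformly on $\Gamma$ once $\alpha$ is below a fixed constant independent of $\delta$. Hence $\Gamma\cap\sigma(H_\tau^\delta(E(s_1)))=\emptyset$, the associated Riesz projection $P_\delta(E(s_1))$ depends continuously on the perturbation and has integer rank equal to $\mathrm{rank}\,P(E(s_1))=1$, producing exactly one simple eigenvalue $\lambda_\delta(E(s_1))$ inside $\Gamma$, which is therefore negative. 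Uniqueness of the negative eigenvalue on all of $(-\infty,0)$ then follows from the strict monotonicity of the reduced Feshbach map $F_\delta(\cdot,E(s_1))$, exactly as in Section~\ref{abc}.

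Finally, the second resolvent identity and the Riesz representation give
\begin{equation*}
P_\delta(E(s_1))-P(E(s_1))=\frac{-\tau}{2\pi i}\oint_\Gamma (z-H_\tau^\delta(E(s_1)))^{-1}\,V_\delta\,(z-H_\tau(E(s_1)))^{-1}\,dz,
\end{equation*}
and bounding the integrand in norm by $C\norm{V_\delta}/\abs{\lambda(E(s_1))}^2$ and the length of $\Gamma$ by $\pi\abs{\lambda(E(s_1))}$ yields $\norm{P_\delta(E(s_1))-P(E(s_1))}\leq C\norm{V_\delta}/\abs{\lambda(E(s_1))}=\mathcal{O}(\alpha)$, uniformly in $\delta<1$. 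The only mildly delicate point is that the smallness threshold on $\alpha$ must be chosen \emph{independently} of $\delta$; this is exactly what the matched scales $\norm{V_\delta}=\mathcal{O}(\delta^{\nu+3/2})$ and $\abs{\lambda(E(s_1))}\sim\alpha^{-1}\delta^{5/2}$ deliver for $\nu\geq 1$.
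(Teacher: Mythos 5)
Your proof follows the same route as the paper's: estimate $\norm{V_\delta}\sim\delta^{\nu+3/2}\leq\delta^{5/2}$, take a contour of radius $\abs{\lambda(E(s_1))}/2\sim\alpha^{-1}\delta^{5/2}$ around the unperturbed eigenvalue, and apply the second resolvent identity to bound the difference of Riesz projections by $C\alpha$ uniformly in $\delta$, with the Neumann-series/rank-invariance argument yielding existence of the eigenvalue and the monotonicity of $F_\delta(\cdot,E)$ giving global uniqueness. This matches the paper's argument step for step; your version is just slightly more explicit about the $\norm{V_\delta}/\abs{\lambda(E(s_1))}$ ratio and the integrand bound.
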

\begin{proof} We know that $H_{\tau}^\delta(E(s_1))$ can have at most one eigenvalue which must be the unique zero of the function in \eqref{horia31} if there is an eigenvalue. The difference 
\begin{equation*}
V_\delta=H_{\tau}(E)-H_{\tau}^\delta(E) = \tau\begin{bmatrix} 0 & \ket{\varphi-\varphi_\delta}\\ \bra{\varphi-\varphi_\delta} & 0 \end{bmatrix}
\end{equation*}
has the asymptotic behavior
\begin{align}\label{horia34}
\norm{V_\delta}&\sim \norm{\varphi-\varphi_\delta}_{L^2(\mathbb{R}^3)}\sim \delta^{\nu+3/2}.
\end{align}
In the norm estimate we used \eqref{horia32} and that $\widehat{\varphi}(k)$ behaves like $\abs{k}^\nu$ near zero. Since $\nu\geq 1$, $\norm{V_\delta}$ goes at least like $\delta^{5/2}$. 

We now use regular perturbation theory. If $z$ is at a distance of order $\alpha^{-1}\delta^{5/2}$ from the spectrum of $H_{\tau}(E(s_1))$ then $z$ is also in the resolvent set of $H_{\tau}^\delta(E(s_1))$ and the norm of $(H_{\tau}^\delta(E(s_1))-z)^{-1}$ is of order  $\alpha \delta^{-5/2}$.  Let $\Gamma_\delta$ be a positively oriented circle centered at $\lambda(E(s_1))$ and of radius $ \abs{\lambda(E(s_1))}/2\sim \alpha^{-1} \delta^{5/2}$. Then $\Gamma_\delta$ is in the resolvent set of $H_{\tau}^\delta(E(s_1))$. Finally, the second resolvent identity leads to
\begin{equation*}
\norm{\frac{1}{2\pi i}\int_{\Gamma_\delta}\bigl( (z-H_{\tau}^\delta(E(s_1)))^{-1}-(z-H_{\tau}(E(s_1)))^{-1}\bigr) dz }\leq C \alpha
\end{equation*}
for some constant $C>0$. This shows that, for $\alpha$ small enough, the Riesz integral 
\begin{equation*}\frac{1}{2\pi i}\int_{\Gamma_\delta} \bigl(z-H_{\tau}^\delta(E(s_1))\bigr)^{-1}dz
\end{equation*}
defines a rank one orthogonal projection which is $P_\delta(E(s_1))$. Moreover, it implies that there is an eigenvalue of $H_{\tau}^\delta(E(s_1))$ located inside $\Gamma_\delta$.
\end{proof}

\begin{lemma}\label{lemma-hc-3}
Let $s_2=s_c+\alpha \delta^2$. There is a $\delta_0<1$ such that the operator $H_{\tau}^\delta(E(s))$ for $0<\delta < \delta_0$ has a bound state $\lambda_\delta(E(s))$ for every $s\in[s_1, s_2]$. We have 
\begin{equation*}P_\delta(E(s))-P_\delta(E(s_1))=\mathcal{O}(\alpha)
\end{equation*}
uniformly in $\delta<\delta_0$ for every $s\in[s_1, s_2]$. 
\end{lemma}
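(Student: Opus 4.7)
My plan is to exploit the spectral decomposition from Section~\ref{abc}: the block $h_\tau^\delta(E)$ of $H_\tau^\delta(E)$ acting on $(\pi_\delta^\perp L^2(\mathbb{R}^3)) \oplus \mathbb{C}$ has essential spectrum $[\delta^2,\infty)$, so it suffices to track a discrete eigenvalue of this reduced operator while controlling its spectral gap. By the Feshbach identity, $\lambda_\delta(E)$ is the unique zero of $F_\delta(\cdot,E)$ from~\eqref{horia31}, and since $F_\delta$ is strictly decreasing in $x$ it is enough to verify that $F_\delta(\delta^2/2, E(s)) < 0$ to ensure $\lambda_\delta(E(s)) \leq \delta^2/2$. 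Using $1/(k^2 - \delta^2/2) \leq 2/k^2$ on $\abs{k}\geq \delta$, the identity $\tau^2 \int \abs{\widehat{\varphi}(k)}^2/k^2\,dk = E_c$, and $\abs{E(s) - E_c} \leq C(\alpha\delta^2+\alpha^{-1}\delta^{5/2})$ for $s\in[s_1,s_2]$, a short computation yields $F_\delta(\delta^2/2, E(s)) \leq -\delta^2/4$ uniformly in $s\in[s_1,s_2]$ provided $\alpha$ is small and $\delta_0$ is small compared to $\alpha$. This simultaneously ensures existence of $\lambda_\delta(E(s))$ and a uniform spectral gap of size at least $\delta^2/2$ between the eigenvalue and the essential spectrum.

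Next I would estimate $\norm{P_\delta'(E)}$ directly from the Feshbach representation of $P_\delta(E)$, which is obtained from~\eqref{P(E)} by replacing $r_0(\lambda(E))$ with the reduced free resolvent $(-\Delta_>-\lambda_\delta(E))^{-1}$. Differentiation in $E$ produces two potentially singular contributions: the vector $(-\Delta_>-\lambda_\delta)^{-2}\ket{\varphi}$ and the scalar $\ip{\varphi}{(-\Delta_>-\lambda_\delta)^{-3}\varphi}$. Thanks to the gap from Step~1, $k^2-\lambda_\delta \geq k^2/2$ for $\abs{k}\geq\delta$, so under Assumption~\ref{cond: coupling 1} with $\nu\geq 1$ a direct Fourier-space estimate gives
\begin{equation*}
\norm{(-\Delta_>-\lambda_\delta)^{-2}\ket{\varphi}}^2 \leq C\int_{\abs{k}\geq\delta} \frac{\abs{\widehat{\varphi}(k)}^2}{k^8}\,dk \leq C'\delta^{-3},
\end{equation*}
and similarly the scalar term is bounded by $C\delta^{-1}$. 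Combining these, $\norm{P_\delta'(E)} \leq C\delta^{-3/2}$ uniformly on $[E(s_1),E(s_2)]$.

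Integration then yields
\begin{equation*}
\norm{P_\delta(E(s)) - P_\delta(E(s_1))} \leq \int_{E(s_1)}^{E(s)} \norm{P_\delta'(E)}\,dE \leq C(s_2-s_1)\delta^{-3/2} \leq C'\bigl(\alpha\delta^{1/2} + \alpha^{-1}\delta\bigr),
\end{equation*}
which becomes $\mathcal{O}(\alpha)$ once $\delta_0$ is chosen with $\delta_0 \leq \alpha^2$. The main obstacle, in my view, is not any single estimate but the balancing of two competing requirements: the interval $[s_1,s_2]$ must have length of order at least $\alpha\delta^2$ for the lemma to be useful in the subsequent argument, while $\norm{P_\delta'(E)}$ blows up like $\delta^{-3/2}$ near threshold. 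The bound $\mathcal{O}(\alpha)$ is therefore only barely achievable, and its validity rests crucially on the $\nu\geq 1$ vanishing of $\widehat{\varphi}$ at the origin supplied by Assumption~\ref{cond: coupling 1}, which is precisely what turns the a priori divergent integrals into $\delta^{-3}$ and $\delta^{-1}$ rather than something worse.
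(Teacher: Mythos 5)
Your proof is correct in substance but follows a genuinely different route from the paper's. The paper proves Lemma~\ref{lemma-hc-3} by operator perturbation theory in the reduced space: the difference $h_\tau^\delta(E(s))-h_\tau^\delta(E(s_1))=(E(s)-E(s_1))\ket{\varsigma}\bra{\varsigma}$ has norm at most $C(\alpha\delta^2+\alpha^{-1}\delta^{5/2})\le C'\alpha\delta^2$ once $\delta_0$ is chosen so that $\alpha^{-1}\delta^{5/2}<\alpha\delta^2$, and comparing the two Riesz integrals over the circle of radius $\delta^2/2$ centered at $\lambda_\delta(E(s_1))$ via the second resolvent identity gives at once $\norm{P_\delta(E(s))-P_\delta(E(s_1))}\le C\delta^{-4}\cdot\delta^2\cdot\alpha\delta^2=C\alpha$, the existence of $\lambda_\delta(E(s))$ inside the circle, and the bound $\lambda_\delta(E(s))\le\delta^2/2$ --- with no differentiation of $P_\delta$ and no use of $\nu$ in this step. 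You instead obtain existence and the gap $\lambda_\delta(E(s))\le\delta^2/2$ from the strict monotonicity of the cut-off Feshbach function $F_\delta(\cdot,E)$ in \eqref{horia31} (a nice side effect: simplicity and rank one of $P_\delta(E(s))$ come directly, rather than a posteriori for $\alpha$ small), and then integrate an explicit Fourier-space bound $\norm{P_\delta'(E)}\le C\delta^{-3/2}$ built from the same two singular terms the paper isolates for $P'(E)$ in Section~\ref{spectral-analysis}; this works, and your sharper bound even allows the milder constraint $\delta_0\le\alpha^2$. Two small points. First, in Step~1 the inequality you invoke, $1/(k^2-\delta^2/2)\le 2/k^2$, points the wrong way for showing $F_\delta(\delta^2/2,E(s))<0$; what you need there is the trivial $1/(k^2-\delta^2/2)\ge 1/k^2$ together with $\tau^2\int_{\abs{k}<\delta}\abs{\widehat{\varphi}(k)}^2 k^{-2}dk=\mathcal{O}(\delta^{2\nu+1})$ and $\abs{E(s)-E_c}\le C\alpha\delta^2$, after which your conclusion $F_\delta(\delta^2/2,E(s))\le-\delta^2/4$ stands (the bound by $2/k^2$ is the one you correctly use in Step~2). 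Second, your closing claim that the $\mathcal{O}(\alpha)$ estimate rests crucially on $\nu\ge1$ is an artifact of your refined estimate: the crude bound $\norm{P_\delta'(E)}\le C/\mathrm{gap}\le C\delta^{-2}$, as used in Lemma~\ref{lemma-hc-4}, combined with the paper's choice of $\delta_0$ (of order $\alpha^4$), already yields $\mathcal{O}(\alpha)$ here without invoking the vanishing of $\widehat{\varphi}$ at the origin, which is instead needed elsewhere, e.g.\ in Lemma~\ref{lemma-hc-2} and Lemma~\ref{lemma-hc-5}.
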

\begin{proof}
In the system with cut-off the projections  live in a reduced Hilbert space.   
Choose $\delta_0(\alpha)$ such that $\alpha^{-1}\delta^{5/2}<\alpha \delta^2$ for any $\delta<\delta_0$.

We know from Lemma \ref{lemma-hc-2} that $\lambda_\delta(E(s_1))$ is negative and at a distance of order $\alpha\delta^{2}$ from zero. The essential  spectrum of the reduced Hamiltonians $h_\tau^\delta(E(s))$ (see \eqref{horia30}) is always $[\delta^2,\infty)$. 

The difference $E(s)-E(s_1)\sim s-s_1$ is of order $\alpha \delta^2$ for all $s\in [s_1,s_2]$. So, there is $C>0$ such that
\begin{equation*}
\norm{ h_{\tau}^\delta(E(s))-h_{\tau}^\delta(E(s_1))}\sim E(s)-E(s_1)\leq C \alpha \delta^2.
\end{equation*}

We again use regular perturbation theory. If $z$ is at a distance of order $\delta^2/2$ from the spectrum of $h_{\tau}^\delta(E(s_1))$ and if $\alpha$ is small enough then $z$ is also in the resolvent set of $h_{\tau}^\delta(E(s))$ by the usual Neumann series argument. Now choose a circle centered at $\lambda_\delta(E(s_1))$ and with radius $\delta^2/2$. Both resolvents, i.e. at times $s$ and $s_1$, diverge at most like $\delta^{-2}$ on this circle, which has circumference $\pi\delta^2$. The same type of argument as in Lemma \ref{lemma-hc-2} based on  Riesz integrals and  the second resolvent identity give
\begin{equation*}
\norm{P_\delta(E(s))-P_\delta(E(s_1))}\leq C \delta^{-4} \delta^2 \alpha\delta^2=C\alpha
\end{equation*}
for some constant $C>0$. Hence $P_\delta(E(s))$ has rank one uniformly in $\delta<\delta_0$ if $\alpha$ is small enough. This also shows that $\lambda_\delta(E(s))$ lies inside the above circle and therefore $\lambda_\delta(E(s))\leq \delta^2/2$ for every $s\leq s_2$. 
\end{proof}

\begin{lemma}\label{lemma-hc-4}
Let $U_{\eta,\delta}(t,t')$ denote the Heisenberg evolution generated by $H_{\tau}^\delta(E(\eta t))$. Then
\begin{equation*}
U_{\eta,\delta}(s_2/\eta,s_1/\eta)P_\delta(E(s_1))U_{\eta,\delta}^*(s_2/\eta,s_1/\eta)=P_\delta(E(s_2))+\mathcal{O}((s_2-s_1)/\delta^2).
\end{equation*}
\end{lemma}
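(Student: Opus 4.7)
The plan is to differentiate the ``dressed'' projection
$$Q(t) = U^*_{\eta,\delta}(t,s_1/\eta)\, P_\delta(E(\eta t))\, U_{\eta,\delta}(t,s_1/\eta)$$
in $t$, use $[H_\tau^\delta(E(\eta t)), P_\delta(E(\eta t))] = 0$ to cancel the ``fast'' piece, and then integrate the remainder using a uniform bound $\|P_\delta'(E)\| \leq C\delta^{-2}$. This is not really an adiabatic argument: no smallness in $\eta$ is used, and the bound $\mathcal{O}((s_2-s_1)/\delta^2)$ is uniform in $\eta$. That uniformity is precisely what will make the lemma usable on the microscopic time scale $(s_2 - s_1)/\eta$ in the next step.

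For the differentiation, since $P_\delta(E(\eta t))$ commutes with $H_\tau^\delta(E(\eta t))$, the chain rule yields
$$Q'(t) = \eta E'(\eta t)\, U^*_{\eta,\delta}(t,s_1/\eta)\, P_\delta'(E(\eta t))\, U_{\eta,\delta}(t,s_1/\eta),$$
so that, after substituting $u = \eta t$,
$$Q(s_2/\eta) - P_\delta(E(s_1)) = \int_{s_1}^{s_2} E'(u)\, U^*_{\eta,\delta}(u/\eta,s_1/\eta)\, P_\delta'(E(u))\, U_{\eta,\delta}(u/\eta,s_1/\eta)\, du.$$
Conjugating this identity by $U_{\eta,\delta}(s_2/\eta,s_1/\eta)$ (which preserves the operator norm) turns the left-hand side into the difference appearing in the lemma, so the task reduces to an $\mathcal{O}(\delta^{-2})$ norm bound on the integrand.

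The quantitative input is a Riesz representation of $P_\delta$ on the same kind of contour as in Lemma \ref{lemma-hc-3}. I would fix a positively oriented circle $\Gamma$ of radius comparable to $\delta^2$ centered at $\lambda_\delta(E(s_1))$. For $\alpha$ small enough, Lemma \ref{lemma-hc-3} implies that $\Gamma$ stays in the resolvent set of $H_\tau^\delta(E(s))$ for every $s \in [s_1, s_2]$, encloses $\lambda_\delta(E(s))$, and satisfies $\|(z - H_\tau^\delta(E(s)))^{-1}\| \leq C/\delta^2$ on $\Gamma$. Differentiating
$$P_\delta(E) = \frac{1}{2\pi i}\oint_\Gamma (z - H_\tau^\delta(E))^{-1}\, dz$$
in $E$ and using $\partial_E H_\tau^\delta(E) = \ket{\varsigma}\bra{\varsigma}$ of unit norm, the pair of resolvents under the integral together with $|\Gamma| \sim \delta^2$ yields $\|P_\delta'(E)\| \leq C/\delta^2$ uniformly on $[E(s_1), E(s_2)]$. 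Plugging this bound into the identity above, and using that $|E'|$ is uniformly bounded on $[-1,0]$, gives $\|Q(s_2/\eta) - P_\delta(E(s_1))\| \leq C(s_2 - s_1)/\delta^2$.

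The main obstacle is exactly the uniform resolvent estimate on $\Gamma$: one has to check that a single contour, chosen once and for all at $s = s_1$, works uniformly throughout $[s_1, s_2]$. This rests on the structural fact that $\lambda_\delta(E(s))$ stays at a distance of order $\delta^2$ from the threshold $\delta^2$ of the essential spectrum, which was already exploited in Lemma \ref{lemma-hc-3}. Apart from that, the argument is just a commutator identity plus a routine contour-integral estimate, with no oscillation cancellation needed.
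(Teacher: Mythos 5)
Your strategy is the same as the paper's: conjugate $P_\delta(E(\eta t))$ by the evolution, differentiate, use $[H_\tau^\delta(E(\eta t)),P_\delta(E(\eta t))]=0$ to kill the fast term, and integrate a bound $\norm{P_\delta'(E)}\leq C\delta^{-2}$ obtained from a spectral gap of size of order $\delta^2$; the paper phrases this last input simply as ``the norm of $P_\delta'$ is bounded by the inverse of the minimal effective gap''. You also correctly identify that no smallness in $\eta$ is used and that the uniformity in $\eta$ is the whole point of the lemma.

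There is, however, one step that fails as written: the contour estimate must be carried out for the reduced block $h_\tau^\delta(E(s))$ acting in $\bigl(\pi_\delta^\perp L^2(\mathbb{R}^3)\bigr)\oplus\mathbb{C}$, not for the full Hamiltonian $H_\tau^\delta(E(s))$. The full operator is block-diagonal with respect to $\Pi_\delta$, and the other block $-\Delta_<$ contributes absolutely continuous spectrum filling all of $[0,\delta^2]$. Since $\lambda_\delta(E(s_1))$ is negative but only at distance $o(\delta^2)$ from zero (it is comparable to $\alpha^{-1}\delta^{5/2}\leq\alpha\delta^2$), any circle of radius comparable to $\delta^2$ centered there crosses $[0,\delta^2]\subset\sigma\bigl(H_\tau^\delta(E(s))\bigr)$; hence it does not stay in the resolvent set of the full operator, the bound $\norm{(z-H_\tau^\delta(E(s)))^{-1}}\leq C/\delta^2$ fails near the real axis, and the Riesz formula you differentiate is not valid with that contour. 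Worse, for $s>s_c$ the eigenvalue $\lambda_\delta(E(s))$ may itself be embedded in $[0,\delta^2]$, so no admissible contour for the full operator exists at all. The repair is immediate and is exactly what the paper does: $P_\delta(E)$ has its range in the $\Pi_\delta^\perp$ block (the eigenvector is built from $\varphi_\delta$, whose Fourier transform vanishes on $B_\delta(0)$), the perturbation $\partial_E H_\tau^\delta(E)=\ket{\varsigma}\bra{\varsigma}$ also acts in that block, the essential spectrum of $h_\tau^\delta(E)$ is $[\delta^2,\infty)$, and $\lambda_\delta(E(s))\leq\delta^2/2$ on $[s_1,s_2]$ by Lemma~\ref{lemma-hc-3}; running your contour argument for $h_\tau^\delta(E(s))$ then gives $\norm{P_\delta'(E(s))}\leq C\delta^{-2}$ uniformly on $[s_1,s_2]$, after which the rest of your proof goes through unchanged.
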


\begin{proof}
Consider the projection valued map given by
\begin{equation*}
\Pi_\eta(t)=U_{\eta,\delta}^*(t,s_2/\eta)P_\delta(E(\eta t))U_{\eta,\delta}(t,s_2/\eta),\quad t\in [s_1/\eta,s_2/\eta].
\end{equation*}
Differentiating and integrating back we obtain
\begin{equation*}
\norm{\Pi_\eta(s_2/\eta)-\Pi_\eta(s_1/\eta)}\leq (s_2-s_1)\sup_{s\in [s_1,s_2]}\norm{P_\delta'(E(s))}.
\end{equation*}
We know that there is a minimal gap of order $\delta^2/2$ between the instantaneous eigenvalue $\lambda_\delta(E(s))$ and the rest of the (essential) spectrum of the reduced Hamiltonian $h_{\tau}^\delta(E(s))$. Then the norm of $P_\delta'(E(s))$ is bounded by the inverse of the minimal effective gap. This proves the lemma. 
\end{proof}

\begin{lemma}\label{lemma-hc-5}
Let $U_{\eta}(t,t')$ denote the Heisenberg evolution generated by $H_{\tau}(E(\eta t))$ and let $U_{\eta,\delta}(t,t')$ denote the Heisenberg evolution generated by $H_{\tau}^\delta(E(\eta t))$. Then there is $C>0$ such that 
\begin{equation*}
\norm{U_{\eta}(t,t')-U_{\eta,\delta}(t,t')}\leq C \abs{t-t'}\delta^{\nu+3/2}.
\end{equation*}
\end{lemma}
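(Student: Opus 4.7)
The plan is to prove Lemma~\ref{lemma-hc-5} by a standard Duhamel/Dyson expansion, exploiting the crucial fact that the perturbation $V_\delta := H_\tau(E(\eta t))-H_\tau^\delta(E(\eta t))$ is time-independent. Indeed, the only $t$-dependent part of both Hamiltonians is the scalar $E(\eta t)$ appearing in the diagonal entry, and this cancels when taking the difference. Thus $V_\delta$ coincides with the operator already introduced in the proof of Lemma~\ref{lemma-hc-2}, and the norm bound $\norm{V_\delta}\leq C\,\delta^{\nu+3/2}$ from \eqref{horia34} is directly available.

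First I would differentiate the two-parameter family $s\mapsto U_\eta(t,s)U_{\eta,\delta}(s,t')$ with respect to $s$. Using the Schr\"odinger equations satisfied by $U_\eta$ and $U_{\eta,\delta}$, the time derivative produces the commutator-like expression
\begin{equation*}
\frac{\partial}{\partial s}\bigl[U_\eta(t,s)U_{\eta,\delta}(s,t')\bigr]
= i\,U_\eta(t,s)\bigl(H_\tau(E(\eta s))-H_\tau^\delta(E(\eta s))\bigr)U_{\eta,\delta}(s,t')
= i\,U_\eta(t,s)V_\delta U_{\eta,\delta}(s,t').
\end{equation*}
Integrating from $s=t'$ to $s=t$ and using $U_\eta(t,t)=\mathrm{Id}$ and $U_{\eta,\delta}(t',t')=\mathrm{Id}$ yields the Duhamel identity
\begin{equation*}
U_\eta(t,t')-U_{\eta,\delta}(t,t')=i\int_{t'}^{t}U_\eta(t,s)\,V_\delta\,U_{\eta,\delta}(s,t')\,ds.
\end{equation*}

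Second, taking norms and invoking the unitarity of $U_\eta(t,s)$ and $U_{\eta,\delta}(s,t')$ together with \eqref{horia34}, I would conclude
\begin{equation*}
\norm{U_\eta(t,t')-U_{\eta,\delta}(t,t')}\leq \abs{t-t'}\,\norm{V_\delta}\leq C\,\abs{t-t'}\,\delta^{\nu+3/2},
\end{equation*}
which is the claimed estimate.

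There is essentially no obstacle here: the only non-trivial ingredient is the observation that the time dependence cancels in $V_\delta$, so that the Dyson remainder is linear in time with coefficient $\norm{V_\delta}$. The decay rate $\delta^{\nu+3/2}$ is inherited directly from the behavior of $\widehat{\varphi}(k)\sim \abs{k}^\nu$ near the origin, which controls $\norm{\varphi-\varphi_\delta}_{L^2}$ via a simple polar integration over $B_\delta(0)$. This lemma, combined with Lemmas~\ref{lemma-hc-2}--\ref{lemma-hc-4}, will then provide the adiabatic comparison between the cut-off dynamics (to which the gapped adiabatic theorem applies) and the true dynamics, yielding the microscopic-time estimate \eqref{thm: main-3} once one optimizes $\delta$ and $\alpha$ in terms of $\eta$.
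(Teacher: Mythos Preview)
Your proof is correct and follows exactly the same approach as the paper: the paper's proof simply invokes the Dyson integral identity relating the two unitaries and bounds the difference by $\abs{t-t'}\norm{V_\delta}$, using \eqref{horia34} for $\norm{V_\delta}$. You have merely written out the Duhamel identity explicitly, which the paper leaves implicit.
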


\begin{proof}
Consider the Dyson integral identity which relates the two unitaries. Since the difference between their generators is $V_\delta$, see \eqref{horia34}, the difference between the two evolution operators is bounded by the length of the time interval times the norm of the perturbation.   
\end{proof}

\begin{lemma}\label{lemma-hc-6} 
Fix $\alpha\in (0,1)$. Define $\eta_0(\alpha)$ to be a solution of
\begin{equation*}
\alpha^{-1} \bigl(\eta_0^{{2}/{(2\nu+7)}}\bigr)^{5/2}= \alpha \bigl(\eta_0^{{2}/{(2\nu+7)}}\bigr)^{2}.
\end{equation*}
Then for every $\eta<\eta_0(\alpha)$ and $s\in (s_c,s_c+\alpha \eta^{{4}/{(2\nu+7)}}]$,
\begin{align}\label{hans1}
U_{\eta}(s/\eta,s_c/\eta)P_cU_{\eta}^*(s/\eta,s_c/\eta) =P_c+ \mathcal{O}(\alpha).
\end{align}
\end{lemma}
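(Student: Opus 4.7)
The plan is to replace the original dynamics with the cut-off dynamics of Lemmas \ref{lemma-hc-2}--\ref{lemma-hc-5}, using the cut-off scale $\delta := \eta^{2/(2\nu+7)}$. Set $s_1 = s_c - \alpha^{-1}\delta^{5/2}$ and $s_2 = s_c + \alpha\delta^2$. The definition of $\eta_0(\alpha)$ guarantees $\delta < \alpha^4$ whenever $\eta<\eta_0$, so in particular $s_c - s_1 \leq s_2 - s_c$, and all the hypotheses of Lemmas \ref{lemma-hc-3}, \ref{lemma-hc-4} are met. Any $s\in(s_c,s_c+\alpha\eta^{4/(2\nu+7)}]$ then coincides with an $s\in(s_c,s_2]\subset[s_1,s_2]$, i.e., lies in the window on which $P_\delta(E(\cdot))$ is a smooth family of rank-one projections with gap of order $\delta^2$ from the rest of the spectrum.

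The first step I would carry out is a spectral comparison showing $\norm{P_c - P_\delta(E(s_c))} = \mathcal{O}(\alpha)$, by inserting $P(E(s_1))$ and $P_\delta(E(s_1))$: from \eqref{horia6} we get $\norm{P(E(s_1))-P_c} = \mathcal{O}(\abs{E_c-E(s_1)}^{1/4}) = \mathcal{O}(\alpha^{-1/4}\delta^{5/8}) = \mathcal{O}(\alpha)$ because $\delta<\alpha^4$; Lemma \ref{lemma-hc-2} supplies the $\mathcal{O}(\alpha)$ bound on the middle step; and Lemma \ref{lemma-hc-3} applied at $s=s_c$ handles $\norm{P_\delta(E(s_1))-P_\delta(E(s_c))}$. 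The second step is a propagation comparison: Lemma \ref{lemma-hc-5} on the interval $[s_c/\eta,s/\eta]$ gives
\begin{equation*}
\norm{U_\eta(s/\eta,s_c/\eta) - U_{\eta,\delta}(s/\eta,s_c/\eta)} \leq C(s-s_c)\eta^{-1}\delta^{\nu+3/2} \leq C\alpha\,\delta^{(2\nu+7)/2}\eta^{-1} = C\alpha,
\end{equation*}
where the last equality is exactly the defining relation $\delta^{(2\nu+7)/2}=\eta$.

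Finally I would combine everything by a telescopic expansion of $U_\eta P_c U_\eta^* - P_c$: insert $U_{\eta,\delta}$ and $P_\delta(E(s_c))$ between the conjugations, producing five terms. Four of them are $\mathcal{O}(\alpha)$ directly from the two estimates above (unitarity absorbs the conjugating factors), and the remaining term $U_{\eta,\delta}(s/\eta,s_c/\eta) P_\delta(E(s_c)) U_{\eta,\delta}^*(s/\eta,s_c/\eta) - P_\delta(E(s_c))$ is bounded by repeating the computation in Lemma \ref{lemma-hc-4}: differentiate the conjugated projection, use $[H_\tau^\delta(E(\eta t)),P_\delta(E(\eta t))]=0$, and integrate $\norm{P_\delta'}\lesssim\delta^{-2}$ over $[s_c,s]$, yielding $\mathcal{O}((s-s_c)/\delta^2)+\norm{P_\delta(E(s))-P_\delta(E(s_c))}=\mathcal{O}(\alpha)$ since $s-s_c\leq\alpha\delta^2$ and by Lemma \ref{lemma-hc-3}. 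The only genuine obstacle is the choice of $\delta$: the exponent $2/(2\nu+7)$ is forced by the requirement that the propagation error $(s-s_c)\eta^{-1}\delta^{\nu+3/2}$ and the short-time projection drift $(s-s_c)/\delta^2$ be simultaneously $\mathcal{O}(\alpha)$ on the widest possible window $s-s_c\sim\alpha\delta^2$, together with the subcritical spectral estimate $\alpha^{-1/4}\delta^{5/8}=\mathcal{O}(\alpha)$; once this balance is fixed, the rest is triangle-inequality book-keeping.
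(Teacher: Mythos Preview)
Your proof is correct and uses the same toolkit (Lemmas~\ref{lemma-hc-2}--\ref{lemma-hc-5}) as the paper, but the organization differs in two respects worth noting.

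First, the paper does not fix $\delta=\eta^{2/(2\nu+7)}$; instead, for each $s$ it writes $s=s_c+\alpha\delta^2$ with the corresponding $\delta\in(0,\eta^{2/(2\nu+7)}]$, so that $s$ always coincides with $s_2$. This changes nothing essential, since all error bounds are monotone in $\delta$ and your fixed choice is the largest one needed.

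Second, and more interestingly, the paper avoids your direct spectral comparison $\norm{P_c-P_\delta(E(s_c))}=\mathcal{O}(\alpha)$. Instead it invokes the threshold adiabatic theorem~\eqref{horia1} to write $P_c=U_\eta(s_c/\eta,s_1/\eta)P(E(s_1))U_\eta^*(s_c/\eta,s_1/\eta)+o_\eta(1)$, and then reduces the lemma to showing
\[
U_\eta(s_2/\eta,s_1/\eta)P(E(s_1))U_\eta^*(s_2/\eta,s_1/\eta)=P_c+\mathcal{O}(\alpha),
\]
which is handled by the chain $P(E(s_1))\to P_\delta(E(s_1))\to P_\delta(E(s_2))\to P_\delta(E(s_1))\to P(E(s_1))\to P_c$ via Lemmas~\ref{lemma-hc-2}, \ref{lemma-hc-4}, \ref{lemma-hc-3}, \ref{lemma-hc-2} and~\eqref{horia6}. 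Your route is more self-contained (it does not rely on Theorem~\ref{thm-horia}(ii)), while the paper's route reuses the already-proved threshold result; both give the same $\mathcal{O}(\alpha)$ bound.
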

\begin{proof}
Each $s$ can be written as $s=s_2=s_c+\alpha\delta^2$ with $\delta\in (0,\eta^{\frac{2}{2\nu+7}}]$. The definition of $\eta_0(\alpha)$ implies that 
\begin{equation*}
\alpha^{-1}\delta^{5/2}\leq \alpha\delta^2.
\end{equation*}

In the subcritical regime the adiabatic theorem holds up to the critical point, see \eqref{horia1}. So
\begin{equation*} P_c=U_{\eta}(s_{c}/\eta,s_1/\eta)P(E(s_1))U_{\eta}^*(s_{c}/\eta,s_1/\eta) +o_\eta(1)
\end{equation*}
holds uniformly in $s_1\in [-1,s_c)$. Let us choose $s_1=s_c-\alpha^{-1}\delta^{5/2}$. Our lemma is proved if we can show
\begin{align}\label{horia35}
U_{\eta}(s_{2}/\eta,s_1/\eta)P(E(s_1))U_{\eta}^*(s_{2}/\eta,s_1/\eta) =P_c+ \mathcal{O}(\alpha),
\end{align}
provided $\eta<\eta_0(\alpha)$.

By Lemma~\ref{lemma-hc-5} with $t=s_2/\eta$ and $t'=s_1/\eta$ the error due to replacing the evolution without cut-off by the one with cut-off is of order $\alpha \eta^{-1}\delta^{\nu+7/2}=\alpha$. Hence \eqref{horia35} is equivalent with
\begin{equation}\label{horia36}
U_{\eta,\delta}(s_{2}/\eta,s_1/\eta)P(E(s_1))U_{\eta,\delta}^*(s_{2}/\eta,s_1/\eta) =P_c+ \mathcal{O}(\alpha),
\end{equation}
provided $\eta<\eta_0(\alpha)$. Lemma~\ref{lemma-hc-2} implies that, up to an error of order $\alpha$, we may replace $P(E(s_1))$ by $P_\delta(E(s_1))$ on the left hand side of \eqref{horia36}. Then Lemma~\ref{lemma-hc-4} implies that, again up to an error of order $\alpha$, the left hand side of \eqref{horia36} can be replaced by $P_\delta(E(s_2))$. Lemma~\ref{lemma-hc-3} shows that $P_\delta(E(s_2))$ is actually close to $P_\delta(E(s_1))$. Using again Lemma \ref{lemma-hc-2}  we can replace $P_\delta(E(s_1))$ by $P(E(s_1))$. Finally, the subcritical projection $P(E(s_1))$ is close in norm to the critical projector $P_c$, see \eqref{horia6}, and the circle is closed. 
\end{proof}

Now the proof of \eqref{thm: main-3} is a direct consequence of Lemma~\ref{lemma-hc-6} by taking the expectation with respect to $\Psi_c$ on both sides of \eqref{hans1}.

\section{Outlook}\label{sec: Dirac}
We shall briefly explain why we expect that our method of proving zero survival probability \eqref{thm: main-4} can work in other quite different settings.
Before that, let us recall two crucial aspects of our proof: 

\begin{enumerate}
\item From the resolvent expression in \eqref{eq: representation of R} we see that when we restrict it to the small sample, its asymptotic expansion around the threshold does not contain the linear term in $\kappa$ due to the condition $\langle\varphi |1\rangle =0$. In particular, this implies that the eigenvalue remains an embedded eigenvalue at the threshold, no resonances are possible. We  want to note that \cite{PD} has a similar implicit assumption, see equation (3) in \cite{PD} which states that no resonances are allowed at the threshold. For such a thing to happen, their external `critical' potential needs to obey a number of algebraic conditions, see Theorem 1.1 (iv) and (v) in \cite{Kl}.

\item From Proposition \ref{prop: propagation estimate} we see that the fixed {\it overcritical} evolution operator $e^{itH_a}$ has a dispersive behavior like  $t^{-\alpha}$ with $\alpha >2$ when restricted to the small sample, a property which plays a decisive role in estimating the last term in \eqref{horia23}. 
Compared to \cite{PD}, we do not need any information on how the instantaneous generalized {\it critical} eigenfunctions behave near the threshold \cite{P}. 
\end{enumerate}

We now switch to the Dirac case and consider the free Dirac operator for a particle of unit mass $
D_0 := - i \vec{\alpha} \cdot \vec{\nabla} + \beta$ 
acting in the Hilbert space $L^2(\mathbb{R}^3) \otimes \mathbb{C}^4$.
Here $\alpha_1,\alpha_2,\alpha_3$ and $\beta$ are the Dirac matrices \cite{Kl,Th} and $\vec{\alpha} := (\alpha_1,\alpha_2,\alpha_3)$.
The spectrum of $D_0$ is purely absolutely continuous and equals $(-\infty,-1] \cup [1,\infty).$  

Let us add to $D_0$ a radially symmetric potential $\mu(s)V(\abs{x})\otimes \mathbf{1}_4$ where $ V(\abs{x})\geq 0$ is compactly supported and  $\mu(s)\geq 0$ varies smoothly with the macroscopic  time parameter $s\in[-1,0]$. We are interested in the operator $$D(s):=D_0+\mu(s) V(\abs{x})\otimes \mathbf{1}_4$$ when $\mu(s)$ varies in such a way that at some critical time $s=s_c$ the operator $D(s_c)$ has an embedded eigenvalue at $+1$.

The orbital angular momentum $\vec{L} = (L_1,L_2,L_3)$ is given by $L_k := (i \epsilon_{kmn} x_m \partial/\partial_{x_n}) \otimes \mathbf{1}_4$ where $\epsilon_{kmn}$ is the totally antisymmetric tensor and $\mathbf{1}_4$ is the identity matrix on $\mathbb{C}^4$.
$\vec{L}^2$ and $L_3$ have a set of joint eigenprojections.
We choose their representation in polar coordinates, $p_{\ell,m} := \mathbf{1}_{\mathrm{rad}} \otimes \ket{Y_{\ell,m}}\bra{Y_{\ell,m}}$ where $\mathbf{1}_{\mathrm{rad}}$ is the identity operator on the radial component and $Y_{\ell,m}$ are the spherical harmonics for the angular components with quantum numbers $\ell \geq 0$ and $\abs{m} \leq \ell$.
The spin is $\vec{S} = (S_1,S_2,S_3)$ with $S_k := \mathbf{1}_{L^2} \otimes \frac12\begin{bmatrix} \sigma_k &0 \\ 0 & \sigma_k \end{bmatrix}$ where $\mathbf{1}_{L^2}$ is the identity operator on $L^2(\mathbb{R}^3)$ and $\sigma_1,\sigma_2,\sigma_3$ the Pauli spin matrices.
The component $S_3$ has two doubly degenerate eigenvalues $\pm 1/2$ corresponding to four eigen-bispinors $\psi_b\in \mathbb{C}^4$. 

The total angular momentum is $\vec{J} = \vec{L} + \vec{S}$.
Note that $\vec{J}^2$ has the eigenvalues $j(j+1)$ where the quantum number $j$ can be any positive half-integer.
Both $\vec{J}^2$ and $J_3$ commute with $D(s)$.
We now fix $j\geq 3/2$ and denote by $\Pi_{j}$ the projection on the corresponding eigenspace of $\vec{J}^2$. The only two allowed angular momentum quantum numbers are $\ell_{\pm} =j\pm 1/2$, and both of them are different from zero.   We have
\begin{equation*}
\Pi_{j}=\mathbf{1}_{\mathrm{rad}}\otimes \sum_{a,a'=\pm }\sum_{|m_{a}|\leq \ell_a} \sum_{|m_{a'}|\leq \ell_{a'}}\sum_{b,b'=1}^4 C_{\ell_a, \ell_{a'}, m_a, m_{a'},b,b'}\; \ket{Y_{\ell_a,m_a}}\bra{Y_{\ell_{a'},m_{a'}}} \otimes \ket{\psi_b}\bra{\psi_{b'}}
\end{equation*}
where the constants are normalizing numerical coefficients.  The operator $D(s)$ commutes with $\Pi_j$ at all times.

The first important result to be proved is that in the appropriate topology, the reduced resolvent $\Pi_j(D(s)-z)^{-1} \Pi_j$ has `a more regular asymptotic expansion than usual' near the two thresholds $\pm 1$ when $\mu(s)$ is not `critical'. In particular, this amounts to proving that when $|z|<1$, the two non-smooth terms proportional with $(1-|z|)^{\pm 1/2}$  are absent; let us exemplify this with the free Dirac operator. Using formula \cite[Eq.~2.1--5]{Kl} where $A_1^{\pm} = 1/(4\pi) \ket{1}\bra{1} \otimes \left( \beta \pm \mathbf{1}_4 \right)$ we have
\begin{align*}
\left( D_0 - z \right)^{-1} &= A_0^{\pm} + A_1^{\pm } \left(1-z^2\right)^{1/2} + A_2^{\pm} (1-z^2) + A_3^{\pm} (1 - z^2)^{3/2} + \ldots
\end{align*}
for which we see that $\Pi_j A_1^{\pm }\Pi_j=0$ due to the presence of $Y_{\ell_a,m_a}$ with $\ell_a\neq 0$. This roughly corresponds to the condition $\langle \varphi |1\rangle$ imposed in our current paper. The improved behavior of the reduced resolvent near thresholds makes it possible for the eigenvalues corresponding to a $j\geq 3/2$ to remain embedded eigenvalues at thresholds. By varying $\mu(s)$, eigenvalues can come out from one threshold,  cross the gap and hit the other threshold, with their eigenfunctions lying in the range of $\Pi_j$ at all times.  

The second important ingredient to be done is a refinement of \eqref{horia23} which demands for a detailed study of the dispersive properties of $\Pi_j e^{-itD_a}P_{\rm ac}(D_a)\Pi_j$ in the appropriate topology, for which we need a decay like $t^{-\alpha}$ with $\alpha>2$. Here $D_a=D(s_a)$ corresponds to a fixed overcritical time $s_a$ when the instantaneous bound state has disappeared.

The case $j=1/2$ is more special and does not enter our framework. The reduced resolvent $\Pi_{1/2} (D(s)-z)^{-1}\Pi_{1/2}$ cannot be made `regular enough' unless some other conditions are imposed on $V$ in order to prevent the appearance of resonances at thresholds. We recall that \cite{PD} also fails to cover the resonant case, see condition (3) in \cite{PD}.

Let us end this outlook by mentioning that our method will also be extended to certain $N$-body systems where $N\geq 2$ (with their center of the mass removed). While the case  $N=2$ has only one threshold and its spectral and scattering analysis is fairly well understood, when $N\geq 3$ the threshold structure is much more complicated and also the needed dispersive estimates are considerably more challenging.

\subsection*{Acknowledgment}

We would like to thank Peter Pickl for very interesting and fruitful discussions. 
HC, AJ, and HKK were partially supported by the Danish Council of Independent Research $\mid$ Natural Sciences, Grant DFF--4181-00042.
GN was partially supported by the VELUX Visiting Professor Program.

\end{document}